\theoremstyle{definition}
\newtheorem{lemma}{Lemma}
\newtheorem{theorem}{Theorem}
\newtheorem{definition}{Definition}
\newtheorem{remark}{Remark}
\newtheorem{proposition}{Proposition}
\newtheorem{conjecture}{Conjecture}
\newcommand{\whitestar}{\mathrel{\text{\ding{75}}}}
\let\existstemp\exists
\let\foralltemp\forall
\renewcommand*{\exists}{\existstemp\mkern2mu}
\renewcommand*{\forall}{\foralltemp\mkern2mu}
\begin{document}

\title{Certifying Majorana Fermions with Elegant-Like Bell Inequalities\\ and a New Self-Testing Equivalence}

\author{Patryk Michalski\orcidlink{0009-0009-0305-7356}}
\email{pmichalski@cft.edu.pl}
\affiliation{Center for Theoretical Physics, Polish Academy of Science, al. Lotnik\'{o}w 32/46, 02-668 Warsaw, Poland}
\affiliation{Institute of Theoretical Physics, University of Warsaw, Pasteura 5, 02-093 Warsaw, Poland}
\author{Arturo Konderak\orcidlink{0000-0002-4546-2626}}
\email{akonderak@cft.edu.pl}
\affiliation{Center for Theoretical Physics, Polish Academy of Science, al. Lotnik\'{o}w 32/46, 02-668 Warsaw, Poland}
\author{Wojciech Bruzda\orcidlink{0000-0001-9743-7927}}
\affiliation{Center for Theoretical Physics, Polish Academy of Science, al. Lotnik\'{o}w 32/46, 02-668 Warsaw, Poland}
\author{Remigiusz Augusiak\orcidlink{0000-0003-1154-6132}}
\affiliation{Center for Theoretical Physics, Polish Academy of Science, al. Lotnik\'{o}w 32/46, 02-668 Warsaw, Poland}

\begin{abstract}
Bell inequalities provide a fundamental tool for probing nonlocal correlations, yet their quantum bound, that is, the maximal value attainable through quantum strategies, is rarely accessible analytically. In this work, we introduce a general construction of Bell inequalities for which this bound can be computed exactly. Our framework generalizes both the Clauser--Horne--Shimony--Holt and Gisin’s elegant inequalities, yielding Bell expressions maximally violated by any number of pairwise anticommuting Clifford observables together with the corresponding maximally entangled state. Under suitable assumptions, our inequalities also enable the device-independent certification of Majorana fermions, understood as multiqubit realizations of Clifford algebra generators. Importantly, we identify an additional equivalence that must be incorporated into the definition of self-testing beyond invariance under local isometries and transposition. This equivalence arises from partial transposition applied to the shared state and to the measurements, which in specific cases leaves all observed correlations unchanged.
\end{abstract}
\date{\today}

\maketitle

\section{Introduction}

One of the central distinctions between classical and quantum physics lies in the structure of correlations between distant systems. In classical theories, such correlations can always be attributed to local hidden variable (LHV) models. Quantum mechanics, in contrast, predicts correlations that cannot be reproduced by any LHV description. This phenomenon, which is a consequence of quantum entanglement, reveals the intrinsically nonlocal character of quantum theory and is known as Bell nonlocality.

To formalize this distinction, Bell inequalities were introduced as constraints that any LHV theory must satisfy, yet which can be violated by quantum mechanics~\cite{Bell1964}. Such violations not only reveal the presence of nonlocal correlations, but also constitute a powerful resource for tasks such as quantum cryptography~\cite{Acin2007} and quantum communication complexity~\cite{Buhrman2010,Buhrman2016}. Moreover, the violation of suitably chosen Bell inequalities enables \emph{self-testing}~\cite{Mayers2004,Supic2020}---the strongest form of device-independent certification---which allows one to infer both the underlying quantum state and the performed measurements solely from the observed correlations.

The device-independent approach to quantum certification eliminates the need to trust the internal functioning of the devices, making it particularly relevant for secure quantum technologies. This paradigm has attracted significant attention in recent years, leading to numerous self-testing results for entangled quantum states. Notably, all pure bipartite entangled states~\cite{Coladangelo2017} and all multiqubit entangled states~\cite{BalanzoJuando2024} are now known to be self-testable. Parallel advances have established self-testing for various families of measurements, including Pauli observables~\cite{Bowles2018}, all two-dimensional projective measurements~\cite{Yang2013}, certain $d$-outcome observables~\cite{Sarkar2021}, and constant-size measurements with algebraic structure~\cite{Mancinska2024}. More recent developments extend these results to all real projective measurements~\cite{Chen2024} and, in network scenarios, even to arbitrary quantum measurements~\cite{sarkar2024universalschemeselftestquantum}.

Although Bell inequalities that enable self-testing are of great interest, designing them for specific states and measurements remains a notoriously challenging task (see, however, Ref.~\cite{Barizien2024custombell}). Our work aims to partially fill this gap. We introduce a framework for constructing bipartite correlation Bell inequalities with arbitrary numbers of binary observables, based on a class of rectangular matrices with pairwise orthogonal rows and normalized columns. We refer to these as row-orthogonal, column-normalized (ROCN) matrices, and to the associated Bell expressions as ROCN Bell inequalities. A key feature of this family is that their quantum bound can be computed analytically; moreover, it is in general tighter than the best-known universal upper bound for correlation Bell inequalities derived in Ref.~\cite{Epping2013}. 

We identify conditions under which ROCN Bell inequalities guarantee a quantum advantage over all LHV models. Remarkably, these inequalities are maximally violated by maximally entangled states together with sets of Clifford observables of arbitrary size. In this sense, our construction extends the well-known Clauser--Horne--Shimony--Holt~\cite{Clauser1969} and Gisin’s elegant~\cite{gisin2007bellinequalitiesquestionsanswers} Bell inequalities to arbitrarily large families of binary, pairwise anticommuting observables.

A further contribution of our work is the self-testing of Majorana operators (or fermions), which provide a concrete physical realization of Clifford algebra generators. Majorana operators play a central role across several areas of physics: they arise in quantum field theory~\cite{Quantization} and quantum statistical mechanics~\cite{Araki2003} (see also Ref.~\cite{Bratteli1981}). In quantum computation, they are particularly important due to their connection to the Pauli group and the stabilizer formalism~\cite{Nielsen2012}. For an overview of their broader role in fermionic systems, see Ref.~\cite{Szalay2021}. They also feature prominently in entropic uncertainty relations, achieving maximal violations in scenarios involving more than two measurements~\cite{Wehner2010,Coles2017}. A contextuality-based approach for certifying two-qubit Majorana fermions was previously proposed in Ref.~\cite{Irfan2020}.

Importantly, for an odd number of Majorana operators, we show that self-testing is defined only up to a specific transformation that generalizes the well-known invariance of quantum correlations under complex conjugation in the case of three Pauli matrices. This reveals a previously unrecognized form of equivalence that should be incorporated into the definition of self-testing. In fact, it turns out that apart from local isometries and complex conjugation (cf. Ref.~\cite{Supic2020}), in certain situations, the observed correlations can also be invariant under partial transposition of the state and the measurement.


\section{Preliminaries}

Before getting to the results, let us first establish the terminology and notation that will be consistently employed throughout the manuscript, and recall the most relevant concepts and facts concerning Bell nonlocality and self-testing.

\subsection{Bell nonlocality \& Bell inequalities}
\label{sec:bell_non_locality}
We consider the standard Bell scenario involving two spacelike separated observers, Alice and Bob, who share a quantum state $\rho_\mathrm{AB}$ that acts on some finite-dimensional composite Hilbert space $\mathcal{H}_\mathrm{AB}=\mathcal{H}_\mathrm{A} \otimes \mathcal{H}_\mathrm{B}$. On their respective subsystems, Alice and Bob perform measurements chosen from sets of $m$ and $n$ independent two-outcome observables, denoted respectively as $A_i$ and $B_j$, with $i \in \{1,\ldots,m\}$ and $j \in \{1,\ldots,n\}$. Formally, $A_i$ and $B_j$ are Hermitian operators acting on $\mathcal{H}_\mathrm{A}$ and $\mathcal{H}_\mathrm{B}$, respectively, such that $A_i^2 = B_j^2 = \mathbbm{1}$ (where, with a slight abuse of notation, $\mathbbm{1}$ denotes the identity on the corresponding local Hilbert space). Observe that, due to Naimark's dilation theorem~\cite{Holevo2012}, it is sufficient to consider projective measurements~\cite{Baptista2023}. The outcomes observed by Alice and Bob are denoted $a$ and $b$, where $a,b=\pm 1$. 

The correlations generated when the parties perform many rounds of measurements are described by a collection of probability distributions $\{p(a,b|i,j)\}$, where $p(a,b|i,j)$ denotes the probability that Alice and Bob obtain outcomes $a$ and $b$ upon performing measurements $i$ and $j$. Since we deal here with two-outcome measurements, one can equivalently represent the observed correlations through a collection of expectation values~\cite{Goh2018}:
\begin{equation}
   \vec{c}\equiv  \{\langle A_iB_j\rangle, \langle A_i\rangle, \langle B_j\rangle\},
\end{equation}
which are given by Born rule as
\begin{eqnarray}
    &\langle A_i B_j\rangle = \Tr[\rho_\mathrm{AB} (A_i \otimes B_j)],& \nonumber\\[1ex]
    &\langle A_i \rangle = \Tr[\rho_\mathrm{AB} (A_i \otimes \mathbbm{1})],     \quad\langle B_j \rangle = \Tr[\rho_\mathrm{AB} (\mathbbm{1} \otimes B_j)].&\nonumber\\
\end{eqnarray}
Both collections (of probability distributions or expectation values) are commonly referred to as \emph{correlations} or \emph{behaviour}. 

Let us now consider correlations for which the joint expectation values admit a local hidden-variable (LHV) decomposition of the form
\begin{equation}
   \langle A_i B_j\rangle = \sum_{\lambda} p(\lambda)\,\langle A_i\rangle_{\lambda}\langle B_j\rangle_{\lambda},
\end{equation}
where $\lambda$ denotes the hidden variable (or shared randomness), $p(\lambda)$ is a probability distribution, and 
$-1\leqslant \langle A_i\rangle_{\lambda},\langle B_j\rangle_{\lambda}\leqslant 1$ for any $\lambda$ and any pair $i,j$. The set of all correlations admitting such a decomposition forms a convex polytope, whose vertices correspond to local deterministic correlations satisfying $\langle A_i B_j\rangle = \langle A_i\rangle \langle B_j\rangle$ for every pair $i,j$, with all individual expectation values $\langle A_i\rangle$ and $\langle B_j\rangle$ taking values $\pm 1$ for any $\lambda$ and any pair $i,j$. 

Quantum correlations that cannot be represented in terms of local hidden variable models are termed Bell nonlocal or simply nonlocal; the existence of such correlations gives rise to the phenomenon of Bell nonlocality. One can detect nonlocality in the correlations $\vec{c}$ with the aid of Bell inequalities
\cite{Bell1964}, whose generic form in the considered scenario reads
\begin{equation}
    I\vcentcolon=\sum_{i=1}^m \sum_{j=1}^n h_{ij} \langle A_iB_j\rangle+\sum_{i=1}^{m} f_i\langle A_i\rangle+
    \sum_{j=1}^{n} g_j\langle B_j\rangle\leqslant \beta_C.
\end{equation}
Here, $h_{ij}$, $f_i$ and $g_j$ are some real parameters, while $\beta_C$ is the maximal value of the \textit{Bell functional} $I$ over the polytope of correlations admitting the LHV representation, typically referred to as the classical bound. Violation of a Bell inequality by the correlations $\vec{c}$ implies that these correlations, as well as the shared quantum state, are nonlocal.

In this work, we restrict our attention to correlation Bell inequalities, a subclass of Bell inequalities in which all the coefficients corresponding to local terms $f_i$ and $g_j$ vanish:
\begin{equation}\label{eq:CHSH_ineq}
    I\vcentcolon=\sum_{i=1}^m \sum_{j=1}^n h_{ij} \langle A_iB_j\rangle\leqslant\beta_C.
\end{equation}
Such Bell inequalities have been thoroughly studied in the literature (see, e.g., Refs.~\cite{Epping2013,PhysRevLett.113.210403}).

A seminal example of such a Bell inequality is the famous 
CHSH Bell inequality~\cite{Clauser1969} which corresponds to the simplest possible Bell scenario where each observer chooses to perform one of two possible two-outcome measurements ($m=n=2$). This inequality can be written as
\begin{equation}
    I_{\mathrm{CHSH}}\vcentcolon=\langle A_1B_1\rangle+\langle A_1B_2\rangle+\langle A_2B_1\rangle-\langle A_2B_2\rangle \leqslant2.
\end{equation}
The maximal quantum value of $I_{\mathrm{CHSH}}$ is $\beta_Q^{\mathrm{CHSH}}=2\sqrt{2}$, which is achieved by the maximally entangled state of two qubits 
\begin{equation}
    \ket{\Phi_2}=\frac{1}{\sqrt{2}}(|00\rangle+|11\rangle),
\end{equation}
and the observables $A_1=X$, $A_2=Z$ for Alice, 
and $B_{1}=(X+Z)/\sqrt{2}$, $B_2=(X-Z)/\sqrt{2}$ for Bob. Here, $X$, $Y$ and $Z$ denote the Pauli matrices. It is worth pointing out that the observables of Alice and Bob anticommute, i.e., 
$\{A_1,A_2\}=\{B_1,B_2\}=0$.

Another example of a Bell inequality of the form~\eqref{eq:CHSH_ineq} is the so-called elegant Bell inequality (EBI)~\cite{gisin2007bellinequalitiesquestionsanswers}, originally introduced in Ref.~\cite{BechmannPasquinucci2003}, which corresponds to a scenario where Alice performs three measurements and Bob four. It reads:
\begin{equation}\label{eq:elegant_ineq}
    I_{\mathrm{EBI}}\vcentcolon= \begin{aligned}[t]
        &\langle A_1B_1\rangle+
        \langle A_1B_2\rangle-\langle A_1B_3\rangle-
        \langle A_1B_4\rangle\\
        &+\langle A_2B_1\rangle-
        \langle A_2B_2\rangle+\langle A_2B_3\rangle-
        \langle A_2B_4\rangle\\
        &+\langle A_3B_1\rangle-\langle A_3B_2\rangle-\langle A_3B_3\rangle+\langle A_3B_4\rangle\leqslant4.
    \end{aligned}
\end{equation}
This inequality is also maximally violated by the two-qubit maximally entangled state $\ket{\Phi_2}$ and the observables $A_1=X$, $A_2=Y$, $A_3=Z$ for Alice, and 
\begin{equation}
    \begin{aligned}
        & B_1 = \frac{1}{\sqrt{3}}(X + Y + Z), & \quad
        & B_2 = \frac{1}{\sqrt{3}}(X - Y - Z), \\
        & B_3 = \frac{1}{\sqrt{3}}(-X + Y - Z), & \quad
        & B_4 = \frac{1}{\sqrt{3}}(-X - Y + Z),
    \end{aligned}
\end{equation}
for Bob. The maximal quantum value of $I_{\mathrm{EBI}}$ is known to be $\beta_Q^{\mathrm{EBI}}=4\sqrt{3}$. The elegance of the EBI inequality lies in its maximal violation when Alice’s measurement eigenstates form a complete set of three mutually unbiased bases (MUBs), while Bob’s states can be grouped into two sets of symmetric informationally complete (SIC) elements; see Ref.~\cite{Tavakoli2021} for a generalization to higher dimensions ($d \geqslant 3$). It is worth noting that, as in the case of the CHSH inequality, Alice's optimal observables mutually anticommute. This property extends beyond these examples: the Bell inequalities that we identify and characterize in this work are maximally violated precisely by mutually anticommuting observables that form representations of Clifford algebra generators. 

\subsection{Self-testing}\label{sec:self_testing}

Let us now introduce the concept of self-testing. We consider again the Bell experiment that produces correlations $\vec{c}$, but now assume that both the shared state $\rho_\mathrm{AB}$ and the local observables $\{A_i\}$ and $\{B_j\}$ are unknown. The central question is whether the observed correlations uniquely identify, up to natural equivalences, the underlying quantum realization. More precisely, we aim to certify that the state and observables are (up to certain transformations) equivalent to those of a fixed reference quantum experiment, characterized by a known bipartite state $\ket*{\tilde{\psi}} \in \mathbb{C}^d \otimes \mathbb{C}^d$ and known observables $\{\tilde {A}_i\}$ and $\{\tilde B_j\}$ acting on $\mathbb{C}^d$.

We recall the standard definition of self-testing for both the shared state and the local measurements~\cite{Supic2020}. It accounts for the fact that a quantum strategy can be uniquely identified only up to local isometries on each subsystem.

\begin{definition}[Self-testing]\label{def:self-testing}
    We say that a correlation $\vec{c}$ \emph{self-tests} the reference state $\ket*{\tilde{\psi}} \in \mathbb{C}^d \otimes \mathbb{C}^d$ and the reference observables $\{\tilde{A}_i\}$, $\{\tilde{B}_j\}$ acting on $\mathbb{C}^d$ if, for every state $\ket{\psi}\in\mathcal{H}_\mathrm{A} \otimes \mathcal{H}_\mathrm{B}$ and every observables $\{A_i\}$ and $\{B_j\}$ generating the correlation $\vec{c}$, there exist local isometries
    \begin{equation}\label{eq:isometries_self-testing}
        U_\mathrm{A} : \mathcal{H}_\mathrm{A} \to \mathbb{C}^d \otimes \mathcal{H}_\mathrm{A'}, \quad U_\mathrm{B} : \mathcal{H}_\mathrm{B} \to \mathbb{C}^d   \otimes \mathcal{H}_\mathrm{B'},
    \end{equation}
    for proper ancillary Hilbert spaces $\mathcal{H}_\mathrm{A'}$ and $\mathcal{H}_\mathrm{B'}$, such that
    \begin{equation}\label{eq:self-testing}
        U_\mathrm{A}\otimes U_\mathrm{B} \left( A_i\otimes B_j \ket{\psi}\right) = \left( \tilde A_i \otimes\tilde B_j \ket*{\tilde\psi}\right)\otimes \ket{\xi},
    \end{equation}
    for all $i,j$ and some auxiliary state $\ket{\xi}\in\mathcal{H}_{\mathrm{A}
    '}\otimes\mathcal{H}_{\mathrm{B}'}$.
\end{definition}

Beyond local isometries---which account for local unitaries and dilations---the correlations are also invariant under complex conjugation of the state and the observables~\cite{Mosca2011,Bowles2018}. This invariance becomes particularly relevant in the multipartite scenario~\cite{Supic2023,BalanzoJuando2024}. A general definition of self-testing that accommodates complex observables has been formulated in Ref.~\cite{Supic2020}. However, as we show in this work, this definition does not capture the self-testing of Clifford observables: there exists a more general transformation, extending beyond ordinary complex conjugation, that likewise preserves all observed correlations.

To provide an illustrative example of self-testing, let us consider the CHSH Bell inequality and let us assume that it is maximally violated by the state $\ket{\psi}$ and the observables $\{A_i\}$ and $\{B_j\}$. Then, one can prove that there exist unitary operators $U_\mathrm{A}$ for Alice and $U_\mathrm{B}$ for Bob such that
\begin{equation}
    U_\mathrm{A}\otimes U_\mathrm{B}\ket{\psi}=\ket{\Phi_2}\otimes\ket{\xi},
\end{equation}
where $\ket{\Phi_2}$ is the maximally entangled state of two qubits, and

\begin{eqnarray}
    &U_\mathrm{A}\, A_1\, U_\mathrm{A}^{\dagger}=X\otimes\mathbbm{1},\quad  U_\mathrm{A}\, A_2\, U_\mathrm{A}^{\dagger}=Z\otimes\mathbbm{1},&\nonumber \\ 
    &U_\mathrm{B}\, B_j \, U_\mathrm{B}^{\dagger}=\displaystyle\frac{1}{\sqrt{2}}[X+(-1)^{j+1}Z]\otimes\mathbbm{1}.&
\end{eqnarray}

An analogous self-testing statement can be established for the elegant Bell inequality presented in Ref.~\cite{PhysRevA.96.032119}. In this case, the observables to be self-tested are not real, and the alternative definition must be imposed~\cite{Bowles2018}.

\subsection{Clifford algebra}\label{sec:clifford_algebra}

Clifford algebras were originally introduced as a generalization of Grassmann’s exterior algebras~\cite{Clifford1878,Lounesto2001,Lundholm2009}, and today they play an important role in both physics and mathematics~\cite{Quantization,Derezinski2023}. In quantum information theory and the foundations of quantum mechanics, they arise in a variety of contexts---for instance, in formulations of entropic uncertainty relations~\cite{Wehner2010} and in steering inequalities exhibiting unbounded violations~\cite{Marciniak2015}. Although several equivalent definitions of Clifford algebras exist~\cite{Lounesto2001}, for our purposes we focus on the associative algebra generated by a collection of Hermitian bounded operators $\{A_i\}_{i=1}^m$ satisfying the canonical anticommutation relations~\cite{Derezinski2023}:
\begin{equation}\label{eq:anticomm_th1}
     \forall i,k \in \{1,\dots,m\}: \quad \{A_i,A_k\}=2\delta_{ik}\,\mathbbm{1}.
\end{equation}
In the following, we refer to the operators $A_i$ as \textit{Clifford observables} or \textit{generators}. A fundamental result, known as the Jordan--Wigner representation~\cite{Jordan1928,Samoilenko1991}, provides an explicit realization of these operators as acting on a system of qubits, usually referred to as Majorana fermions.
\begin{theorem}[Jordan--Wigner Representation Theorem]\label{th:1}
    Let $\{A_i\}_{i=1}^m$ be a set of observables acting on a finite-dimensional Hilbert space $\mathcal{H}$ and satisfying the canonical anticommutation relations~\eqref{eq:anticomm_th1}. Let $m=2r+\varepsilon$, with $\varepsilon\in\{0,1\}$ and $r=\lfloor m/2\rfloor$. Then, up to a unitary transformation, the Hilbert space $\mathcal{H}$ can be decomposed as a tensor product of qubits:
    \begin{equation}\label{eq:tensor_product_JW}
        \mathcal{H} \cong \bigotimes_{k=1}^{r} \mathbb{C}^2 \otimes \mathcal{H}'.
    \end{equation}
    With respect to this decomposition, the first $2r$ operators $A_i$ can be expressed in terms of Pauli matrices as    
    \begin{equation}\label{eq:U}
        A_{i} =
        \begin{cases}
            Y^{\otimes (i-1)/2} \otimes Z \otimes \mathbbm{1}^{\otimes (2r-i-1)/2}_2 \otimes \mathbbm{1}_{\mathcal{H}'} & \text{for } i \text{ odd}, \\
            Y^{\otimes (i-2)/2} \otimes X \otimes \mathbbm{1}^{\otimes(2r-i)/2}_2 \otimes \mathbbm{1}_{\mathcal{H}'} & \text{for } i \text{ even}.
        \end{cases}
    \end{equation}
    For $\varepsilon=1$ (i.e., odd $m$), the last operator takes the form
    \begin{equation}\label{eq:last_operator_JW}
        A_m = Y^{\otimes r} \otimes A_m',
    \end{equation}
    where $A_m'$ is a Hermitian and unitary operator acting on $\mathcal{H'}$.
\end{theorem}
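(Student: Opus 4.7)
The plan is induction on $r = \lfloor m/2 \rfloor$, peeling off one qubit factor of $\mathcal{H}$ at each stage. The base case $r=0$ is immediate: for $m=0$ there is nothing to prove, and for $m=1$ we take $\mathcal{H}' = \mathcal{H}$ with $A_1' = A_1$, which is Hermitian and unitary by \eqref{eq:anticomm_th1}.

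For the inductive step I first bring the pair $(A_1, A_2)$ to canonical form. Because $A_1$ is Hermitian with $A_1^2 = \mathbbm{1}$, it is diagonalizable with spectrum in $\{\pm 1\}$. The relation $\{A_1, A_2\} = 0$ forces $A_2$ to intertwine the $\pm 1$ eigenspaces of $A_1$, and $A_2^2 = \mathbbm{1}$ then implies that $A_2$ restricts to a unitary isomorphism between them, so in particular they share the same dimension. Denoting the $(+1)$-eigenspace of $A_1$ by $\mathcal{H}_1$, picking any orthonormal basis there, and using its image under $A_2$ as a basis of the $(-1)$-eigenspace, one assembles a unitary $U_1 : \mathcal{H} \to \mathbb{C}^2 \otimes \mathcal{H}_1$ with $U_1 A_1 U_1^\dagger = Z \otimes \mathbbm{1}$ and $U_1 A_2 U_1^\dagger = X \otimes \mathbbm{1}$.

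The key observation is then that any Hermitian operator on $\mathbb{C}^2 \otimes \mathcal{H}_1$ anticommuting with both $Z \otimes \mathbbm{1}$ and $X \otimes \mathbbm{1}$ is necessarily of the form $Y \otimes B$: expanding in the Pauli basis of the first factor, simultaneous anticommutation with $Z$ and with $X$ annihilates the $\mathbbm{1}, X$, and $Z$ components. Applied to $U_1 A_i U_1^\dagger$ for $i \geq 3$, this yields Hermitian operators $B_{i-2}$ on $\mathcal{H}_1$ which, thanks to $Y^2 = \mathbbm{1}$, inherit the canonical anticommutation relations $\{B_j, B_l\} = 2\delta_{jl}\mathbbm{1}$. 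Invoking the inductive hypothesis on the $m-2$ operators $\{B_j\}$ provides a further unitary together with a decomposition $\mathcal{H}_1 \cong \bigotimes_{k=2}^{r} \mathbb{C}^2 \otimes \mathcal{H}'$ that puts the $B_j$ into Jordan--Wigner form; prepending the factor $Y$ from the first qubit shifts every $Y^{\otimes\cdot}$ prefix by one position and reproduces \eqref{eq:U}. For odd $m$, the terminal operator $B_{m-2}$ is assigned by induction to $Y^{\otimes (r-1)} \otimes A_m'$ with $A_m'$ Hermitian and unitary on $\mathcal{H}'$, and the additional $Y$ from the final peeling step promotes this to $Y^{\otimes r} \otimes A_m'$, matching \eqref{eq:last_operator_JW}.

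The main obstacle is less conceptual than combinatorial: one must verify that at each inductive stage the tensor-factor ordering, together with the precise count of $Y$ and $\mathbbm{1}_2$ factors flanking the central $Z$ or $X$ in \eqref{eq:U}, matches the prescription indexed by $i$. A clean way to discharge this is to strengthen the inductive statement so that it already incorporates the explicit formulae \eqref{eq:U}--\eqref{eq:last_operator_JW}, and then check by direct substitution that the index shift $A_i \mapsto B_{i-2}$ (equivalently, replacing $r$ by $r-1$) preserves the pattern; the unitary supplied by the inductive hypothesis acts trivially on the peeled-off qubit and therefore commutes with the prepended $Y$, so no interference with the $Y^{\otimes\cdot}$ prefix can arise.
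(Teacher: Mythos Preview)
Your proof follows the same inductive strategy as the paper: peel off one qubit at a time by bringing the pair $(A_1,A_2)$ to $(Z\otimes\mathbbm{1},\,X\otimes\mathbbm{1})$, observe that every remaining operator must be of the form $Y\otimes B_i$, and recurse on the $B_i$. The only notable difference is in how the canonical form for the first pair is reached. The paper first diagonalises $A_1$ to $Z\otimes\mathbbm{1}$, then expands the transformed $A_2$ in the Pauli basis of the first factor, uses the constraints to obtain $X\otimes A_X'+Y\otimes A_Y'$ with $[A_X',A_Y']=0$ and $A_X'^2+A_Y'^2=\mathbbm{1}$, simultaneously diagonalises $A_X',A_Y'$, and finally applies a diagonal phase unitary to rotate to $X\otimes\mathbbm{1}$. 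Your argument short-circuits this: since $A_2$ is a unitary isomorphism between the $\pm 1$ eigenspaces of $A_1$, choosing the basis of the $(-1)$-eigenspace to be the $A_2$-image of the chosen basis of the $(+1)$-eigenspace delivers both canonical forms in one stroke. This is cleaner and avoids the Pauli decomposition and phase rotation altogether; the paper's route, on the other hand, makes the residual freedom in $U_1$ (the phases $\psi_k$) more visible.
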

For completeness, the proof of the above theorem is presented in Appendix~\ref{app:ST-theorems}.

We singled out the last operator in Eq.~\eqref{eq:last_operator_JW} because it is not uniquely defined. This is directly related to the existence of two unitarily inequivalent irreducible representations of the Clifford algebra in the odd case. Both representations act on the Hilbert space

\begin{equation}
    \mathcal{H}=\bigotimes_{k=1}^r \mathbb{C}^2,
\end{equation}
and can be transformed into one another by the action of the following outer automorphism~\cite{Derezinski2023}:
\begin{equation}\label{eq:automporhism_alpha}
\alpha(A_{i_1} \ldots A_{i_k}) = (-1)^k A_{i_1} \ldots A_{i_k},
\end{equation}
followed by the unitary transformation $A_i\mapsto Y^{\otimes r}A_i Y^{\otimes r}$, where the generators $A_i$ are given by Eqs.~\eqref{eq:U} and~\eqref{eq:last_operator_JW}, with $\mathcal{H}'=\mathbb{C}$. The automorphism~\eqref{eq:automporhism_alpha} acts uniformly on all generators, effectively multiplying each of them by $-1$. For an odd number of generators, this implies that the last generator $A_m$ cannot be mapped back to itself by the chosen unitary transformation. Consequently, Clifford observables with odd $m$ can be self-tested only up to an additional operator $A'_m = P_0-P_1$, where $P_0$ and $P_1$ are orthogonal projections satisfying $P_0+P_1=\mathbbm{1}_{\mathcal{H}'}$. The two subspaces defined by $P_0$ and $P_1$ correspond to the two inequivalent irreducible representations of the Clifford algebra.

\section{Bell inequalities from ROCN matrices}\label{sec:ROCN_definition}

In this section, we introduce a general framework for constructing bipartite Bell inequalities involving arbitrary numbers of binary observables, which attain their maximal quantum violation for sets of Clifford observables acting on maximally entangled states. Our construction falls within the class of correlation Bell inequalities and recovers several well-known examples, including the CHSH inequality~\cite{Clauser1969} and Gisin’s elegant Bell inequality~\cite{gisin2007bellinequalitiesquestionsanswers}. Moreover, it exhibits structural similarities with the family of Platonic Bell inequalities introduced in Ref.~\cite{Pal2022}, as discussed in Subection~\ref{sec:comparison_literature}. Importantly, in suitable settings, the maximal violation of our inequalities enables self-testing of Clifford observables.

We begin with the definition of an ROCN matrix.

\begin{definition}[ROCN matrix]\label{def:ROCN_matrix}
Let $h$ be a real $m\times n$ matrix with $m\leqslant n$. We say that $h$ is a \emph{row-orthogonal and column-normalized (ROCN) matrix} if all its rows are nonzero and it satisfies the following conditions:
\begin{widetext}
\begin{align}
    &\forall i, k \in \{1,\ldots,m\}: \quad \sum_{j = 1}^n h_{ij} h_{kj} = \delta_{ik} \sum_{j = 1}^n \label{eq:hmatrix_1}h_{ij}^2 \quad \text{--- orthogonality of rows}, \\ \label{eq:hmatrix_2}
    &\forall j \in \{1,\ldots,n\}: \quad \quad \sum_{i = 1}^m h_{ij}^2 = 1 \quad \text{--- normalization of columns}.
\end{align}
\end{widetext}
\end{definition}

As we will see in Subsection~\ref{sec:quantum-bound} and Section~\ref{sec:exact-self-testing}, the conditions defining an ROCN matrix play a central role in deriving the quantum bound, establishing that the resulting Bell inequalities are nontrivial, and guaranteeing the possibility of self-testing. We now turn to the definition of the Bell observable, which encodes a Bell inequality in operator form: its expectation value on a given quantum state reproduces the corresponding Bell expression.

Given an ROCN matrix $h\in\mathbb{R}^{m \times n}$, we define the corresponding \textit{ROCN Bell operator} as
\begin{equation}\label{eq:bell_operator}
    {\mathcal{B}}_{h} = \sum_{i = 1}^m \sum_{j = 1}^n h_{ij} A_i \otimes B_j,
\end{equation}
where $\{A_i\}_{i=1}^m$ and $\{B_j\}_{j=1}^n$ are the binary observables of Alice and Bob, respectively. The associated ROCN Bell inequality is then
\begin{equation}
    I_h = \sum_{i = 1}^m \sum_{j = 1}^n h_{ij} \langle A_i  B_j\rangle \leqslant \beta_C^{h},
\end{equation}
where $\beta_C^h$ denotes the classical (local) bound. In the following, we refer to $I_h$ as the \textit{ROCN Bell functional}. 

It is worth pointing out that both the CHSH and Gisin’s elegant Bell inequalities can be expressed in the form~\eqref{eq:bell_operator} after a suitable rescaling. We revisit this point in Subsection~\ref{sec:comparison_literature}, where we systematically compare our framework with other existing Bell inequalities. 

\subsection{Quantum bound}
\label{sec:quantum-bound}

We now determine the maximal quantum value $\beta_Q^h$ of the Bell operator $ {\mathcal{B}}_h$ defined in Eq.~\eqref{eq:bell_operator}. Our approach relies on a sum-of-squares (SOS) decomposition~\cite{Supic2020} that allows us to both upper-bound the operator and construct a quantum realization that saturates the bound.

Specifically, we seek a family of operators $\{N_\alpha\}$, depending on the observables $A_i$ and $B_j$, such that
\begin{equation}\label{eq:sos_decomposition}
    \beta_Q^h \mathbbm{1} - {\mathcal{B}}_h = \frac{1}{2} \sum_{\alpha} N_\alpha^\dagger N_\alpha \geqslant 0.
\end{equation}
Once such a decomposition is established, one can try to identify a physical realization achieving $\beta_Q^h$ by finding a quantum state $\ket{\psi}$ that lies in the kernel of each $N_\alpha$, i.e., $N_\alpha\ket{\psi}=0$ for all $\alpha$. Thanks to the specific structure of the ROCN Bell operator~\eqref{eq:bell_operator}, this procedure is particularly straightforward and effective.

\begin{proposition}[Quantum bound]\label{prop:quantum_bound}
    Let $h$ be an $m\times n$ ROCN matrix. Then, the maximal quantum value of the corresponding ROCN Bell functional $I_h$ is $\beta_Q^h = n$. 
\end{proposition}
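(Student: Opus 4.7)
The plan is to follow the sum-of-squares strategy already outlined in the text, where the correct choice of shift operators is dictated by which of the two ROCN conditions kills the cross terms. The natural candidate is, for every $j \in \{1,\dots,n\}$,
\begin{equation}
N_j \;=\; \mathbbm{1}\otimes B_j \;-\; \sum_{i=1}^m h_{ij}\, A_i\otimes\mathbbm{1},
\end{equation}
which mimics the CHSH-style SOS but grouped by Bob's measurement index. Expanding $N_j^\dagger N_j$ and using $A_i^2=B_j^2=\mathbbm{1}$, the diagonal $i=k$ piece in the $A_iA_k$ term gives $\sum_i h_{ij}^2\,\mathbbm{1}$, which equals $\mathbbm{1}$ by the column-normalization condition~\eqref{eq:hmatrix_2}. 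Summing over $j$, the off-diagonal $i\neq k$ piece accumulates the coefficient $\sum_j h_{ij}h_{kj}$, which vanishes by the row-orthogonality condition~\eqref{eq:hmatrix_1}. Collecting terms, I expect to obtain the clean identity
\begin{equation}
\tfrac{1}{2}\sum_{j=1}^n N_j^\dagger N_j \;=\; n\,\mathbbm{1} \;-\; \mathcal{B}_h,
\end{equation}
which is manifestly positive semidefinite and hence yields the upper bound $\beta_Q^h \leqslant n$.

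For the matching lower bound, I would build an explicit realization that saturates each $N_j$ simultaneously. Take $\{A_i\}_{i=1}^m$ to be any Jordan--Wigner representation of the Clifford generators provided by Theorem~\ref{th:1}, acting on $\mathbb{C}^{2^r}$ with $r=\lfloor m/2\rfloor$ (padded trivially if $m$ is odd), and set
\begin{equation}
B_j \;=\; \sum_{i=1}^m h_{ij}\, A_i^{T}.
\end{equation}
Hermiticity is automatic, and $B_j^2=\mathbbm{1}$ follows because the transposed generators still pairwise anticommute, so the cross terms cancel and column-normalization reduces $B_j^2$ to $\sum_i h_{ij}^2\,\mathbbm{1}=\mathbbm{1}$. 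Choosing the state to be the maximally entangled state $\ket{\Phi}$ of dimension $2^r$ and invoking the identity $(\mathbbm{1}\otimes M)\ket{\Phi}=(M^T\otimes \mathbbm{1})\ket{\Phi}$, one gets $\mathbbm{1}\otimes B_j\ket{\Phi}=\sum_i h_{ij}\,A_i\otimes\mathbbm{1}\ket{\Phi}$, i.e.\ $N_j\ket{\Phi}=0$ for every $j$. Therefore $\langle\Phi|\mathcal{B}_h|\Phi\rangle=n$, matching the upper bound.

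The two ingredients together give $\beta_Q^h=n$. The only delicate step I anticipate is bookkeeping in the SOS expansion: one must be careful that the $i\neq k$ contributions appear with the full coefficient $\sum_j h_{ij}h_{kj}$ before invoking row orthogonality, and one should verify that the $i=k$ contribution of the anticommutator combines with the $\mathbbm{1}$ coming from $B_j^2$ to produce the factor $2n$ (rather than $n$) on the right-hand side. Beyond this, no further ingredient is needed: the existence of the saturating Clifford realization is guaranteed by the Jordan--Wigner theorem already stated, so constructing Bob's observables as the stated linear combinations of $A_i^T$ automatically provides a valid quantum strategy attaining the bound.
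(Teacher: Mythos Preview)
Your proof is correct and follows essentially the same approach as the paper. Your SOS operators $N_j=\mathbbm{1}\otimes B_j-\sum_i h_{ij}A_i\otimes\mathbbm{1}$ differ from the paper's choice $N_j=\mathbbm{1}\otimes\mathbbm{1}-\sum_i h_{ij}A_i\otimes B_j$ only by right multiplication with the unitary $\mathbbm{1}\otimes B_j$, so the resulting $N_j^\dagger N_j$ and the saturation argument via the maximally entangled state and Clifford observables are identical.
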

Before presenting the proof, we introduce the strategy that achieves the maximal quantum value. Let $m=2r+\varepsilon$ as in Theorem~\ref{th:1}. The quantum bound $\beta_Q^h=n$ is attained when Alice and Bob have access to Hilbert spaces
\begin{equation}
    \mathcal{H}_\mathrm{A}=\mathcal{H}_\mathrm{B}=\bigotimes_{k=1}^r \mathbb{C}^2,
\end{equation}
and share the maximally entangled state
\begin{equation}\label{eq:me_state}
    \ket{ \Phi_d}=\frac{1}{\sqrt d}\sum_{\boldsymbol{i}\in\{0,1\}^r}\ket{\boldsymbol{i}}_{\mathrm{A}}\otimes\ket{\boldsymbol{i}}_{\mathrm{B}}\in\mathcal{H}_\mathrm{A}\otimes \mathcal{H}_\mathrm{B},
\end{equation}
where $d=2^r$, $\boldsymbol{i}=(i_1,\dots,i_r)$, and $\ket{\boldsymbol{i}}=\ket{i_1} \otimes \ket{i_2} \otimes \dots \otimes \ket{i_r}$ is the canonical tensor product basis. Alice’s optimal observables $\{\tilde A_i\}_{i=1}^m$ form a set of mutually anticommuting operators:
\begin{equation}\label{eq:alice_observables}
    \forall i,k\in\{1,\dots,m\}: \quad \{\tilde A_i,\tilde A_k\}=2\delta_{ik}\mathbbm{1},
\end{equation}
and are explicitly constructed according to Eqs.~\eqref{eq:U} and~\eqref{eq:last_operator_JW}, with $\mathcal{H}'=\mathbb{C}$ and, for odd $m$, $A'_m = {1}$. Bob’s observables $\{\tilde B_j\}_{j=1}^n$ are defined as
\begin{equation}\label{eq:bob_observables}
   \forall\, j\in\{1,\dots,n\}: \quad \tilde B_j=\sum_{i=1}^m h_{ij}\, \tilde A_i^\mathsf{T},
\end{equation}
where $\tilde A_i^\mathsf{T}$ denotes the transpose of $\tilde A_i$ with respect to the canonical tensor product basis of $\mathcal{H}_{\mathrm{B}}$.
\begin{definition}\label{def:reference_strategy}
    The quantum strategy specified by the state~\eqref{eq:me_state} together with the measurements~\eqref{eq:alice_observables} and~\eqref{eq:bob_observables} is called the \textit{reference} or \textit{reference strategy} for achieving the maximal quantum value $\beta_Q^{h}$ of the ROCN Bell functional $I_h$.
\end{definition}

\begin{proof}[Proof of Proposition~\ref{prop:quantum_bound}.]
We begin by upper-bounding the Bell operator via a suitable SOS decomposition. For each $j\in\{1,\dots,n\}$, define the operators
\begin{equation}
   N_j \vcentcolon= \mathbbm{1}\otimes \mathbbm{1} -  \sum_{i = 1}^m h_{ij} A_i \otimes B_j.
\end{equation}
A direct computation, using the ROCN conditions~\eqref{eq:hmatrix_1}--\eqref{eq:hmatrix_2}, gives
\begin{align}\label{eq:sos}
    \frac{1}{2} \sum_{j=1}^n N_j^\dag N_j 
    &= \frac{1}{2} \sum_{j = 1}^n \left(\mathbbm{1} \otimes \mathbbm{1} -  \sum_{i = 1}^m h_{ij} A_i \otimes B_j \right)^2 
    \nonumber\\
    &= n\,\mathbbm{1}\otimes\mathbbm{1} -  {\mathcal{B}}_h \geqslant 0.
\end{align}
This provides an upper bound on the maximal quantum value:
\begin{equation}\label{eq:B_quantum_bound}
    \beta_Q^h = \max_{\ket{\psi},\{A_i\},\{B_j\}} \bra{\psi}  {\mathcal{B}}_h \ket{\psi} \leqslant n.
\end{equation}
Inequality~\eqref{eq:B_quantum_bound} is saturated precisely when there exists a state $\ket{\psi}$ lying in the kernel of all $N_j$. Equivalently, this requires
\begin{equation}\label{eq:saturation_quantum_bound}
\forall j \in \{1,\ldots,n\}: \quad \left(\sum_{i = 1}^m h_{ij} A_i \otimes B_j \right) \ket{\psi} = \ket{\psi}.
\end{equation}

To verify that the bound~\eqref{eq:B_quantum_bound} is tight, consider the reference strategy given in Definition~\ref{def:reference_strategy}. Bob’s observables $\{\tilde B_j\}_{j=1}^n$ are Hermitian by construction and satisfy
\begin{align}
    \tilde B_j^2 &= \bigg(\sum_{i=1}^m h_{ij} \tilde A_i^\mathsf{T} \bigg)^2 \nonumber\\ 
    &= \sum_{i < k}^m h_{ij} h_{kj}\{\tilde A_i,\tilde A_k\} + \sum_{i=1}^m h_{ij}^2 \,\mathbbm{1} = \mathbbm{1},
\end{align}
where we used that the observables $\{\tilde A_i\}$ mutually anticommute and that $h$ has column-normalized entries. Next, we employ the well-known flip identity
\begin{equation}\label{eq:flip-flop}
    \mathcal{O}\otimes \mathbbm{1} \,\ket{\Phi_d} = \mathbbm{1} \otimes \mathcal{O}^\mathsf{T}\ket{\Phi_d},
\end{equation}
valid for any operator $\mathcal{O} \in\mathcal B(\mathcal{H}_\mathrm{A})$. Applying this relation, we obtain
\begin{align}\label{eq:kernel_N}
    \left(\sum_{i = 1}^m h_{ij}\tilde A_i \otimes\tilde B_j \right) \ket{\Phi_d} 
    &= (\tilde B_j^\mathsf{T}\otimes\tilde B_j)\ket{\Phi_d} \nonumber\\
    &= (\mathbbm{1}\otimes\tilde B_j^2)\ket{\Phi_d} 
    = \ket{\Phi_d}.
\end{align}
Thus, $\ket{\Phi_d}$ lies in the kernel of every $N_j$, meaning that the SOS bound~\eqref{eq:B_quantum_bound} is saturated. We therefore conclude that $\beta_Q^h=n$.
\end{proof}

\subsection{Classical bound}
\label{sec:classical-bound}

Having established that the quantum bound is given by $\beta_Q^h = n$, we now turn our attention to determining the maximal classical value $\beta_C^h$. We provide sufficient conditions under which a non-trivial gap $\beta_Q^h > \beta_C^h$ exists. Such a gap implies a genuine quantum advantage, and its existence imposes further constraints on the structure of the ROCN matrix.

As discussed in Subsection~\ref{sec:bell_non_locality}, in a classical and realistic description, the outcomes of observables are assumed to be predetermined by a local hidden variable $\lambda$, distributed according to a probability distribution $p(\lambda)$, prior to any measurement being performed~\cite{Supic2020,Augusiak2014}. For a Bell operator generated by the ROCN matrix $h$, the classical expectation value takes the form
\begin{equation}
    \langle  {\mathcal{B}}_h \rangle = \sum_{\lambda} p(\lambda) \sum_{i=1}^m \sum_{j=1}^n h_{ij} \langle{A_i}\rangle_\lambda \langle B_j\rangle_\lambda.
\end{equation}
Since the Bell operator is linear in the expectation values $\langle A_i\rangle_\lambda$ and $\langle B_j\rangle_\lambda$, which are fixed for each $\lambda$, the classical bound $\beta^h_C$ is attained by considering only deterministic assignments of outcomes, i.e., the extremal points of the local polytope:
\begin{equation}
    \beta_C^h = \max_{\substack{\vec{a} \in \{-1,1\}^m \\ \vec{b} \in \{-1,1\}^n}} \sum_{j = 1}^n \sum_{i = 1}^m h_{ij} a_i b_j.
\end{equation}

This expression can be simplified by noting that the optimal choice of $\vec{b}$ for a fixed $\vec{a}$ is given by $b_j = \mathrm{sgn}(\sum_{i=1}^m h_{ij} a_i)$, leading to
\begin{equation}
\label{reduced-trick}
    \beta^h_C = \max_{\vec{a} \in \{-1,1\}^m} \sum_{j = 1}^n \left| \sum_{i = 1}^m h_{ij} a_i \right|.
\end{equation}
Using the normalization condition on the columns of $h$, we can further express the bound as
\begin{equation}
    \beta_C^h = \max_{\vec{a} \in \{-1,1\}^m} \sum_{j = 1}^n \sqrt{1 + 2 \sum_{i < k}^m h_{ij} h_{kj} a_i a_k }.
\end{equation}
Applying the inequality $\sqrt{1+x} \leqslant 1 + \frac{1}{2} x$, which holds for $x \in [-1, \infty[$, we obtain the following upper bound on the value of $\beta_C^h$:
\begin{equation}\label{eq:beta_c}
    \beta_C^h \leqslant \max_{\vec{a} \in \{-1,1\}^m} \sum_{j = 1}^n \left(1 + \sum_{i < k}^m h_{ij} h_{kj} a_i a_k \right) = n.
\end{equation}
This inequality becomes tight (that is, $\beta_C^h = n$) if there exists an assignment $\vec{a} \in \{-1,1\}^m$ for which the pairwise contributions vanish for all columns of $h$, meaning
\begin{equation}\label{eq:triviality}
    \forall j \in \{1,\ldots,n\}: \quad \sum_{i < k}^m h_{ij} h_{kj} a_i a_k = 0.
\end{equation}
Therefore, in order to ensure a {\it quantum violation} of the Bell inequality (that is, to have $\beta_C^h < \beta_Q^h = n$), it is necessary and sufficient to impose the following condition on the matrix $h$:
\begin{equation}\label{eq:non-triviality}
    \forall \vec{a} \in \{-1,1\}^m \; \exists j \in \{1,\ldots,n\}: \quad \sum_{i < k}^m h_{ij} h_{kj} a_i a_k \ne 0.
\end{equation}
While it is simple to construct examples where the bound in Eq.~\eqref{eq:beta_c} is not saturated, in Subsection~\ref{sec:hadamard_matrices} we will return to the question of whether ROCN matrices exist for which $\beta_Q^h = \beta_C^h$.

\section{Self-testing from ROCN Bell inequalities}
\label{sec:exact-self-testing}

As demonstrated in the previous sections, ROCN matrices can be employed to construct non-trivial Bell inequalities that are maximally violated by Clifford observables acting on maximally entangled states. We now proceed to identify the further conditions required to obtain a self-testing statement. Specifically, our goal is to show that the maximal quantum violation of the ROCN Bell inequality not only signals the presence of non-classical correlations, but also certifies---up to well-defined equivalences---the form of the measurements and the shared quantum state.

In the proof of Proposition~\ref{prop:quantum_bound}, we introduced an SOS decomposition of $ {\mathcal B}_h$ [see Eq.~\eqref{eq:sos}]. As will become clear in the proof of Theorem~\ref{th:self_testing_statement}, the saturation condition~\eqref{eq:saturation_quantum_bound} can be rewritten as a homogeneous linear system in the anticommutators $\{A_i, A_k\}$:
\begin{equation}\label{eq:eqnset}
    \forall j \in \{1,\ldots,n\}: \quad \sum_{i < k}^m h_{ij}h_{kj} \{A_i, A_k\} = 0.
\end{equation}
This linear system is characterized by an $n \times \tfrac{1}{2}m(m-1)$ matrix $M$. If $M$ has full column rank, it follows that the observables ${A_i}$ form a family of pairwise anticommuting operators. By the Jordan--Wigner representation theorem, Alice’s observables can then be brought, via a suitable unitary transformation, to the canonical form given in Definition~\ref{def:reference_strategy}. However, when the number of Alice’s observables is odd, the Clifford algebra admits two unitarily inequivalent irreducible representations (see Subsection~\ref{sec:clifford_algebra}). Consequently, there exist two quantum strategies that both reach the quantum bound, yet are not related by any dilation or local unitary transformation (local isometry).

As a result, under the standard notion of self-testing given in Definition~\ref{def:self-testing}, the reference strategy cannot be self-tested for odd $m$. A similar issue was addressed in Ref.~\cite{Kaniewski17} for self-testing of a two-qubit system based on commutation relations, and in Ref.~\cite{PhysRevA.96.032119} for self-testing of the elegant Bell inequality. Nevertheless, one can prove that self-testing exists up to a specific transformation introduced in Subsection~\ref{sec:clifford_algebra}. Specifically, let us consider the reference strategy from Definition~\ref{def:reference_strategy}, with odd $m=2r+1$, and define the transformed observables $\{\tilde A_i^{\whitestar}\}$, $\{\tilde B_j^{\whitestar}\}$ as
\begin{align}\label{eq:A_whitestar}
   \forall i\in\{1, \dots,2r\}:&\quad \tilde A_i^{\whitestar}=\tilde A_i,\quad \tilde A_{2r+1}^{\whitestar}= -\tilde A_{2r+1},\\
   \forall j\in\{1,\dots, n\}:&\quad  \tilde B_j^{\whitestar}= \sum_{i=1}^m h_{ij} \tilde A_i^{\whitestar\mathsf{T}}.\label{eq:B_whitestar}
\end{align}

\begin{remark}\label{remark:partial_transposition}
    Notice that $\tilde A_{i}^{\whitestar}= Y^{\otimes r} \alpha(\tilde A_i)Y^{\otimes r}$, where $\alpha$ is defined in Eq.~\eqref{eq:automporhism_alpha}.
    Alternatively, we can obtain both $A_i^{\whitestar}$ and $B_j^{\whitestar}$ as partial transposition on the last qubit. Explicitly, define
    \begin{equation}\label{eq:partial_trace}
        \mathsf{T}_r : \mathcal O_1\otimes \dots \otimes \mathcal O_r\mapsto \mathcal O_1\otimes  \dots \otimes \mathcal O_r^\mathsf{T},
    \end{equation}
    for any $\mathcal{O}_i \in \mathcal{B}(\mathbb{C}^2)$. It is straightforward to verify that $\tilde A_i^{\whitestar}=\mathsf{T}_r(\tilde A_i)$ and $\tilde B_j^{\whitestar}=\mathsf{T}_r(\tilde B_j)$. Furthermore, the state $\ket{\Phi_d}$ transforms as
    \begin{align}
        \mathsf T_r \otimes \mathsf T_r (\ketbra{\Phi_d})&=\ketbra{\Phi_2}\otimes \dots \otimes (\ketbra{\Phi_2})^{\mathsf T}\nonumber\\
        &= \ketbra{\Phi_d},
    \end{align}
    and is therefore invariant. It is important to emphasize that the partial transposition $\mathsf T_r$ is not a positive map and thus does not preserve positivity of arbitrary quantum states. The invariance holds here only because $\ket{\Phi_d}$ factorizes into a tensor product of $r$ two-qubit maximally entangled states, each of which is invariant under transposition on both subsystems. Since partial transposition preserves the trace, the observed correlations remain unchanged.
\end{remark}

In the ROCN framework, we say that the correlation $\vec c$ self-tests the reference strategy if, for every state $\ket{\psi}\in\mathcal{H}_{\mathrm{A}}\otimes \mathcal{H}_{\mathrm{B}}$ and every observables $\{A_i\}$ and $\{B_j\}$ generating $\vec c$, there exist local isometries
\begin{align}
    U_{\mathrm{A}}:\mathcal{H}_\mathrm{A} \rightarrow \mathbb{C}^d \otimes \mathcal{H}_{\mathrm{A}'}\otimes \mathcal{H}_{\mathrm{A}''}, \\ U_{\mathrm{B}}:\mathcal{H}_\mathrm{B} \rightarrow \mathbb{C}^d \otimes \mathcal{H}_{\mathrm{B}'}\otimes \mathcal{H}_{\mathrm{B}''},
\end{align}
for proper ancillary Hilbert spaces $\mathcal{H}_{\mathrm{A}'}$, $\mathcal{H}_{\mathrm{A}''}$, $\mathcal{H}_{\mathrm{B}'}$ and $\mathcal{H}_{\mathrm{B}''}$ such that
\begin{align}\label{eq:two_parties_self-testing}
    U_{\mathrm{A}}\otimes U_{\mathrm{B}} \left(A_i\otimes B_j\ket{\psi}\right) = \left( \bar A_i \otimes \bar B_j \right) \ket*{\tilde\psi} \otimes \ket{\xi},
\end{align}
for all $i,j$ and some auxiliary state $\ket{\xi} \in \mathcal{H}_{\mathrm{A}'} \otimes \mathcal{H}_{\mathrm{A}''} \otimes \mathcal{H}_{\mathrm{B}'} \otimes \mathcal{H}_{\mathrm{B}''} $. The operators $\bar A_i$ and $\bar B_j$ take the form
\begin{align}
    &\bar{A}_i = \tilde{A}_i \otimes M_0 + \tilde{A}_i^{\whitestar}\otimes M_1, \\
    &\bar{B}_j = \tilde{B}_j \otimes N_0 + \tilde{B}_j^{\whitestar} \otimes N_1,
\end{align}
where ${M_0,M_1}$ and ${N_0,N_1}$ are POVMs acting on $\mathcal{H}_{\mathrm{A}'}$ and $\mathcal{H}_{\mathrm{B}'}$, respectively, and satisfying
\begin{equation}
    \mel{\xi}{(M_0\otimes N_0+M_1\otimes N_1)}{\xi}=1.
\end{equation}

This construction closely resembles the notion of self-testing for complex measurements~\cite[Definition 6]{Supic2020}. As we explain in Appendix~\ref{app:partial_transposition_vs_cc_u}, the two definitions coincide only when $m \equiv 3\,(\text{mod } 4)$, corresponding to an odd number of qubits in the Jordan--Wigner representation. In this case, the partial transposition of the reference observables is unitarily equivalent to complex conjugation. By contrast, when $m \equiv 1\,(\text{mod } 4)$, corresponding to an even number of qubits, this equivalence breaks down, and the definition in Ref.~\cite{Supic2020} no longer captures all transformations that leave the observed correlations invariant. This shows the existence of new equivalences that should be taken into account in the definition of self-testing.

With these observations in place, we now proceed to outline the general characterization of self-testing for ROCN Bell inequalities.

\begin{theorem}[Self-testing from ROCN Bell inequalities]\label{th:self_testing_statement}
    Let $h$ be an $m \times n$ ROCN matrix, and define the $n \times \tfrac{1}{2}m(m-1)$ matrix $M$ as
    \begin{equation}\label{eq:matrix_M}
        M_{j,(i,k)} = h_{ij} h_{kj},
    \end{equation}
    for $j \in \{1,\dots,n\}$ and $i, k \in \{1,\dots,m\}$ with $i < k$, where each pair $(i,k)$ is treated as a single index. 
    The maximal violation of the corresponding ROCN Bell inequality self-tests the reference strategy introduced in Definition~\ref{def:reference_strategy} if and only if $M$ has full column rank:
    \begin{equation}\label{eq:necessary_self_testing}
            \rank(M) = \frac{1}{2}m(m-1).
    \end{equation}
    For odd $m$, the reference strategy is self-tested in the sense of Eq.~\eqref{eq:two_parties_self-testing}.
\end{theorem}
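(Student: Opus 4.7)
The proof naturally splits into showing that the rank condition is sufficient for self-testing and that it is necessary. I would begin from the SOS decomposition~\eqref{eq:sos} of Proposition~\ref{prop:quantum_bound}: maximal violation forces $\ket\psi$ to lie in the common kernel of all $N_j$, so that $\sum_{i}h_{ij}A_i\otimes B_j\ket\psi=\ket\psi$ for every $j$. Squaring this identity and using $B_j^2=\mathbbm{1}$ together with the column normalization $\sum_i h_{ij}^2=1$ produces the key consequence
\begin{equation*}
\sum_{i<k} h_{ij}h_{kj}\,\{A_i,A_k\}\otimes\mathbbm{1}\ket\psi \;=\; 0,\qquad j\in\{1,\ldots,n\},
\end{equation*}
which is precisely the homogeneous linear system $M\mathbf v=0$ with $v_{(i,k)}=\{A_i,A_k\}\otimes\mathbbm{1}\ket\psi$. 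The hypothesis that $M$ has full column rank then forces $\{A_i,A_k\}\otimes\mathbbm{1}\ket\psi=0$ for all pairs $i<k$. An analogous computation, using either the flip identity~\eqref{eq:flip-flop} or a symmetric rearrangement, yields the corresponding anticommutation relations for Bob on the support of his reduced state.

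Next, following the standard argument used, e.g., in Refs.~\cite{Kaniewski17,PhysRevA.96.032119}, I would restrict both sets of observables to the local supports of the reduced states; the restricted operators are Hermitian, square to the identity, and genuinely anticommute, hence generate a representation of the Clifford algebra. Theorem~\ref{th:1} then brings Alice's restricted observables into the canonical Jordan--Wigner form~\eqref{eq:U}, with an additional Hermitian unitary $A_m'$ appearing when $m=2r+1$. Writing $A_m'=P_0-P_1$ on the ancillary space $\mathcal H_{\mathrm A''}$ produces exactly the POVM $\{M_0,M_1\}$ that separates the two inequivalent irreducible Clifford representations isolated in Subsection~\ref{sec:clifford_algebra}, and the same construction on Bob's side supplies $\{N_0,N_1\}$. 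Row orthogonality of $h$ together with $m\leqslant n$ guarantees that $h$ has full column rank, which lets me invert the saturation relation $\sum_{i}h_{ij}A_i\otimes B_j\ket\psi=\ket\psi$ to recover Bob's observables in the form~\eqref{eq:bob_observables}. A standard SWAP-type isometry built from the anticommuting observables then maps $\ket\psi\mapsto\ket{\Phi_d}\otimes\ket\xi$ and conjugates $A_i\otimes B_j$ into $\bar A_i\otimes\bar B_j$; the correlation identity $\langle\xi|(M_0\otimes N_0+M_1\otimes N_1)|\xi\rangle=1$ follows because the Born statistics of the unknown strategy must reproduce those of the reference strategy on every block labelled by $M_a\otimes N_b$.

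For the converse direction, given any nonzero $\mathbf v\in\ker M$ I would construct an explicit Clifford-algebra extension producing a quantum realization that saturates $\beta_Q^h=n$ but contains a pair of observables whose anticommutator does not vanish on the support of the state. Such a realization cannot be related to the reference one by any local isometry combined with the partial-transposition equivalence~\eqref{eq:A_whitestar}--\eqref{eq:B_whitestar}, ruling out self-testing and thereby establishing necessity of the rank condition.

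The main obstacle I anticipate is the odd-$m$ case. Concretely, proving that the only non-isometric equivalence that must be added to Definition~\ref{def:self-testing} is the partial transposition of Remark~\ref{remark:partial_transposition}, and that the resulting structure packages neatly into the POVMs $\{M_0,M_1\},\{N_0,N_1\}$ with the correlation constraint stated above, requires careful bookkeeping of how the two inequivalent irreducible Clifford representations can coexist on distinct invariant blocks of $\ket\psi$. One must also check---as flagged in Appendix~\ref{app:partial_transposition_vs_cc_u}---that for $m\equiv 1\pmod 4$ this new equivalence genuinely extends the complex-conjugation freedom already recognized in Ref.~\cite{Supic2020}, so that the odd-$m$ self-testing statement truly is phrased in the sense of Eq.~\eqref{eq:two_parties_self-testing} rather than the earlier definition.
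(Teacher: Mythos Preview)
Your Part~I and your necessity sketch are essentially the paper's argument: squaring the SOS saturation gives the linear system $M\mathbf v=0$, full column rank forces $\{A_i,A_k\}=0$, and conversely (Lemma~\ref{th:2}) a nonzero $\vec s\in\ker M$ is turned into a Gram deformation $G(\alpha)=\mathbbm 1_m+\alpha S$ producing observables $A_i(\alpha)=\sum_j v_{i,j}(\alpha)\tilde A_j$ that still saturate $\beta_Q^h$ but have nonvanishing anticommutators, blocking self-testing.

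Where your plan breaks is the treatment of Bob. You assert that ``an analogous computation \ldots\ yields the corresponding anticommutation relations for Bob,'' but Bob's $n\geqslant m$ observables do \emph{not} anticommute in the reference strategy: $\{\tilde B_j,\tilde B_{j'}\}=2\bigl(\sum_i h_{ij}h_{ij'}\bigr)\mathbbm 1$, and the ROCN conditions normalize columns without making them orthogonal (in the elegant inequality the four $B_j$ certainly do not pairwise anticommute). So you cannot put Bob's side into Jordan--Wigner form independently, and your later ``inversion via full column rank of $h$'' also fails as written, since an $m\times n$ ROCN matrix with $m\leqslant n$ has full \emph{row} rank only.

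The paper's Part~II bypasses this. Writing $\ket\psi=(P_{\mathrm A}\otimes\mathbbm 1)\ket{\Phi_D}$ in the Schmidt basis and using $F_j^2=\mathbbm 1$ (with $F_j=\sum_i h_{ij}A_i$) together with the flip identity, the saturation condition is upgraded to the \emph{operator} identity $B_jP_{\mathrm A}=P_{\mathrm A}F_j^{\mathsf T}$ on Bob's support; combined with its adjoint this gives $[P_{\mathrm A},F_j^{\mathsf T}]=0$, whence $B_j=F_j^{\mathsf T}=\sum_i h_{ij}A_i^{\mathsf T}$ directly---no Clifford structure on Bob's side is ever invoked. Row orthogonality (multiply by $h_{kj}$, sum over $j$) then promotes this to $[P_{\mathrm A},A_k^{\mathsf T}]=0$ for each $k$. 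The state is identified not through a SWAP isometry but via the commutant characterization of Theorem~\ref{th:3}: since $P_{\mathrm A}$ commutes with the whole Clifford family it factors as $\mathbbm 1_2^{\otimes r}\otimes P'$, forcing $\ket\psi=\ket{\Phi_d}\otimes\ket\xi$ and simultaneously isolating the residual $A_m'$ in the odd case. The projections $M_0=N_0$, $M_1=N_1$ onto the $\pm 1$ eigenspaces of $A_m'$ then appear automatically, and $\langle\xi|(M_0\otimes N_0+M_1\otimes N_1)|\xi\rangle=1$ follows from $[P',A_m']=0$ rather than from any separate correlation-matching argument.
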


The detailed proof of Theorem~\ref{th:self_testing_statement} is presented in Appendix~\ref{app:proof_main_theorem}.

Note that in the case of even $m$, although the observables in Eq.~\eqref{eq:U} are generally complex, they remain unitarily equivalent to their complex conjugates. In this sense, self-testing is still possible under the standard Definition~\ref{def:self-testing}. 

\section{Applications}\label{sec:applications}

\subsection{Comparison with other families of Bell inequalities}
\label{sec:comparison_literature}

In the previous sections, we demonstrated that it is possible to construct a special class of Bell inequalities generated by ROCN matrices. Under certain conditions, the maximal violation of such an inequality allows one to self-test the reference strategy introduced in Definition~\ref{def:reference_strategy}. Notably, both the CHSH and Gisin's elegant Bell inequalities, given in Eqs.~\eqref{eq:CHSH_ineq} and~\eqref{eq:elegant_ineq}, can be recovered within this framework by choosing their corresponding ROCN matrices as
\begin{gather}
    h_{\mathrm{CHSH}}=\frac{1}{\sqrt 2}
    \begin{bmatrix}
         +1 & +1 \\
         +1 & -1
    \end{bmatrix}, \\
    h_{\mathrm{EBI}}=\frac{1}{\sqrt 3}
    \begin{bmatrix}
         +1 & +1 & -1 & -1 \\
         +1 & -1 & +1 & -1 \\
         +1 & -1 &  -1 &  +1 
    \end{bmatrix}.
\end{gather}
Observe that, up to normalization, the first matrix takes the form of a Hadamard matrix~\cite{Goyeneche2023}, while the second corresponds to a truncated (or partial) Hadamard matrix, as discussed in Subsection~\ref{sec:hadamard_matrices}. It is worth noting that a similar generalization of these two Bell inequalities was proposed in Ref.~\cite{Tavakoli2020} in terms of Platonic solids, and later extended to a more general framework in Ref.~\cite{Pal2022}. A Platonic Bell inequality is defined by considering a matrix $h$ satisfying the ROCN conditions and, in addition,
\begin{equation}\label{eq:hmatrix_3}
    \forall i\in\{1,\dots, m\}:\quad \sum_{j=1}^n h_{ij}^2= \frac{m}{n}.
\end{equation}
In this case, the matrix $h$ is semi-orthogonal~\cite{Abadir2005}. The Bell functional is then constructed from two such semi-orthogonal matrices, $h$ and $h'$, of dimensions $m \times n$ and $m \times n'$, respectively, by defining their Gram product:
\begin{equation}
\forall j\in\{1,\dots, n\},\; k\in\{1,\dots, n'\}:\quad	M_{jk}=\sum_{i=1}^mh_{ij}h'_{ik}.
\end{equation}
The corresponding Bell operator takes the form
\begin{equation}\label{eq:platonic_bell_operator}
	 {\mathcal B}_{h,h'}=\sum_{j=1}^{n}\sum_{k=1}^{n'} M_{jk} A_j\otimes B_k,
\end{equation}
where $\{A_j\}_{j=1}^{n}$ and $\{B_k\}_{k=1}^{n'}$ denote the measurement observables for the two parties. The maximal quantum value of this operator is given by $\beta_{Q}^{h,h'}=nn'/m$. We refer to any Bell inequality of the form~\eqref{eq:platonic_bell_operator} as a \emph{Platonic Bell inequality}. { The name follows from the fact that, when the columns of $h$ correspond to vectors pointing toward the vertices of a Platonic solid, the ROCN conditions together with Eq.~\eqref{eq:hmatrix_3} are satisfied simultaneously.}

It is straightforward to verify that, by setting $n' = m$ and $h' = \delta_{ij}$, the Platonic Bell operator~\eqref{eq:platonic_bell_operator} reduces to an ROCN Bell operator associated with the matrix $h$. A natural question, therefore, is whether the converse holds---namely, whether every ROCN matrix is also Platonic (semi-orthogonal). To address this, we seek explicit examples of ROCN Bell inequalities that are not Platonic, or equivalently, ROCN matrices that do not satisfy the semi-orthogonality condition~\eqref{eq:hmatrix_3}. In what follows, we provide a constructive discussion outlining an algorithm for generating arbitrary ROCN matrices, which can be used to identify such non-Platonic instances.

First, observe that any ROCN matrix $h$ admits a singular value (or polar) decomposition of the form
\begin{equation}\label{eq:singular_value}
    h = S W,
\end{equation}
where $W$ is an $n\times n$ orthogonal matrix, and $S$ is a rectangular diagonal matrix containing the row norms of $h$. Explicitly, defining
\begin{equation}
\forall i \in \{1,\dots,m\}: \quad H_i^2 \coloneqq \sum_{j=1}^{n} h_{ij}^2,
\end{equation}
we can write
\begin{equation}\label{eq:sigma_matrix}
    S = \begin{bmatrix}
    H_{1}&0&\dots &0 &0&\dots &0\\
    0&H_{2}&\dots &0 &0&\dots &0\\
    \vdots &\vdots &\ddots &\vdots &\vdots&\ddots &\vdots\\
    0 & 0 &\dots &H_m &0 &\dots &0
    \end{bmatrix}.
\end{equation}
Conversely, suppose a family of positive coefficients $\{H_i\}_{i=1}^m$ is given, satisfying the normalization condition
\begin{equation}
\sum_{i=1}^m H_i^2 = n.
\end{equation}
We can then define $S$ in the form~\eqref{eq:sigma_matrix}. The question remains whether there exists an $n\times n$ orthogonal matrix $W$ such that $h = S W$ is ROCN.

Observe first that condition~\eqref{eq:hmatrix_1} is automatically satisfied for any orthogonal $W$. The nontrivial constraint arises from condition~\eqref{eq:hmatrix_2}, i.e., the normalization of the columns of $h$, which reads:
\begin{align}\label{eq:hmatrix2}
    \sum_{i=1}^n h_{ij}^2 = \sum_{i=1}^m \sum_{k,\ell=1}^nS_{ik}W_{k j }S_{i\ell}W_{\ell j}=\sum_{i=1}^m H_i^2 W_{ij}^2=1.
\end{align}
Hence, for $h$ to be ROCN, the squared entries $W_{ij}^2$ must transform the vector of squared row norms $\vec H=(H_1^2,H_2^2,\dots,H_m^2)$ into the uniform vector $\vec{1}_n$. By Horn’s lemma~\cite[Theorem~4]{Horn1954}, since $\vec H$ majorizes $\vec{1}_n$, there always exists an orthogonal matrix $W$ for which condition~\eqref{eq:hmatrix2} holds. However, such a matrix $W$ will not satisfy Eq.~\eqref{eq:hmatrix_3} unless $\vec H$ is uniform. Consequently, the corresponding ROCN Bell operator is not Platonic.

\begin{remark}\label{remark:bound}
For a Platonic Bell inequality, the quantum bound can be obtained from the general result~\cite[Theorem~1]{Epping2013},
\begin{equation}\label{eq:bound_epping}
    \beta_Q^h\leqslant \sqrt{mn} \norm{h}_{2},
\end{equation}
where $\norm{h}_2$ denotes the maximal singular value of $h$. From the singular value decomposition~\eqref{eq:singular_value}, and using the form of $S$ given in Eq.~\eqref{eq:sigma_matrix}, we have
\begin{equation}
    \norm{h}_2 = \max_{i\in\{1,\dots,m\}}H_i.
\end{equation}
Since
\begin{equation}
    \sum_{i=1}^m H_i^2 = \sum_{i=1}^m\sum_{j=1}^n h_{ij}^2 = n,
\end{equation}
it follows that
\begin{equation}
    \max_{i\in\{1,\dots,m\}} \sum_{j=1}^n h_{ij}^2 \geqslant \frac{n}{m},
\end{equation}
with equality if and only if condition~\eqref{eq:hmatrix_3} holds. Therefore Eq.~\eqref{eq:bound_epping} provides the exact quantum bound of an ROCN Bell inequality if and only if it is a Platonic Bell inequality.
\end{remark}
\begin{figure}
    \begin{center}
    \includegraphics[width=\linewidth]{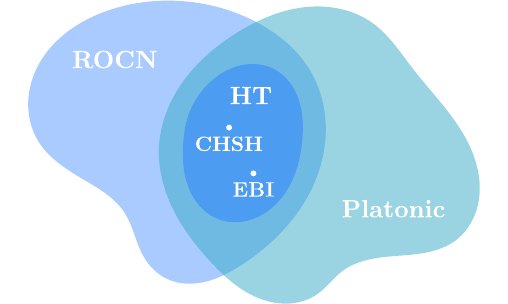}
    \end{center}
    \caption{The three classes of Bell inequalities are compared. The Hadamard truncated (HT) class, introduced in Subsection~\ref{sec:hadamard_matrices}, generalizes both the CHSH and Gisin's elegant Bell inequality (EBI). The ROCN family, discussed in this work, and the Platonic Bell inequalities, introduced in Ref.~\cite{Pal2022}, both contain the Hadamard truncated Bell inequalities.}
\end{figure}
\subsection{A note on Hadamard matrices}
\label{sec:hadamard_matrices}
In this section, we apply the general methods developed in Subsections~\ref{sec:quantum-bound} and~\ref{sec:classical-bound} to a particular class of matrices---namely, the class of real-valued Hadamard matrices (RHM)---thereby extending the results of Ref.~\cite{Goyeneche2023}. These objects form a special subclass of ROCN matrices.

A Hadamard matrix $H$ is defined as a square matrix with entries in $\{-1,1\}$ and orthogonal rows (and consequently columns). Hence, $H$ satisfies the condition $HH^{\mathsf{T}} = H^{\mathsf{T}}H = n \mathbbm{1}_n$.
The dimension of any RHM is necessarily either $n=2$ or $n=4d$ for some $d\in\mathbb{N}$. However, the general existence of such matrices for all $n$ divisible by $4$ remains unresolved---this is the content of the long-standing \emph{Hadamard conjecture}. The smallest case for which the conjecture remains open is $n = 668$~\cite{Horadam2007, DGK14}.

Observe that if $H$ is a Hadamard matrix, then we can remove any number of rows to obtain a new matrix which, up to a normalisation factor, satisfies the ROCN conditions from Eqs.~\eqref{eq:hmatrix_1}--\eqref{eq:hmatrix_2}. As a concrete example, consider the $n = 4$ Hadamard matrix and its modified version after removing the first row:
\begin{equation}
    H=\begin{bmatrix}
    +1&+1&+1&+1\\
    +1&+1&-1&-1\\
    +1&-1&+1&-1\\
    +1&-1&-1&+1\\
    \end{bmatrix} \ \mapsto \
    H'=\begin{bmatrix}
    +1&+1&-1&-1\\
    +1&-1&+1&-1\\
    +1&-1&-1&+1\\
    \end{bmatrix}.\nonumber
\end{equation}
The corresponding normalized ROCN matrix is $h = H'/\sqrt{3}$, which generates the Gisin's elegant Bell inequality in Eq.~\eqref{eq:elegant_ineq}. Interestingly, the construction described above allows one to obtain both examples in which the Bell inequality exhibits quantum advantage ($\beta_C<\beta_Q$) or saturation ($\beta_C=\beta_Q$). In general, given an $n\times n$ real Hadamard matrix $H$, if we remove $n-m$ rows to obtain the truncated Hadamard matrix $H'$, the corresponding ROCN matrix is in the form
\begin{equation}
    h=\frac{H'}{\sqrt m}.
\end{equation}

\begin{remark}
Take a generic Hadamard matrix $H$ of dimension $n = 4d$. By removing a single row, we obtain a \textit{partial} or \textit{truncated Hadamard matrix} $H'$. The corresponding ROCN matrix $h$ for the Bell operator takes the form
\begin{equation}\label{eq:hmatrix_hadamard}
    h=\frac{H'}{\sqrt{4d-1}}.
\end{equation} 
According to Theorem~\ref{th:1}, a Bell operator constructed from $h$ as in Eq.~\eqref{eq:bell_operator} attains maximal quantum value $\beta_Q^h = 4d$. The classical value $\beta_C^h$ is given by
\begin{align}
    \beta_C^{h} &= \max_{\vec a\in\{-1,1\}^{(4d-1)}}\sum_{j=1}^{4d}\abs{\sum_{i=1}^{4d-1}h_{ij} a_i} \nonumber\\
    &= \frac{1}{\sqrt{4d-1}} \underbrace{\left(\max_{\vec a\in\{-1,1\}^{(4d-1)}}\sum_{j=1}^{4d}\abs{\sum_{i=1}^{4d-1}H'_{ij} a_i}\right)}_{=\, C}\leqslant \beta_Q,
\end{align}
where the quantity $C$ is by construction a natural number. We show in this case that $\beta_C^{h}<\beta_Q^h$. Suppose, for contradiction, that $\beta_C^{h}= \beta_Q^{h}$. Then, the equality $C = 4d\sqrt{4d-1}$ must hold, which implies that $\sqrt{4d-1}$ is a rational number of the form
\begin{equation}
    \sqrt{4d-1}=\frac{p}{q}, \ \text{with } p, q \in \mathbb{N} \text{ coprime}.
\end{equation}
Then $4d-1=p^2/q^2$, and since $p, q$ are coprime, we conclude that $q = 1$. Therefore, $\sqrt{4d-1}$ must be an odd integer, i.e., $\sqrt{4d-1} = 2k+1$. Squaring both sides and rearranging, we obtain
\begin{equation}
    d = k^2 + k + \frac{1}{2}\not\in\mathbb{N},
\end{equation}
which contradicts the assumption. Therefore $\beta_C^h<\beta_Q^{h}$.
\end{remark}

In general, when we consider full (untruncated) Hadamard matrices with $m=n$, we can find examples for which the quantum bound is equal to the classical bound, showing that conditions~\eqref{eq:hmatrix_1} and~\eqref{eq:hmatrix_2} are not sufficient to have a non-trivial Bell inequality. In particular, the bound $\beta_C^h=\beta_Q^{h} = n = 4d$ is attained only for \emph{constant row sum} (or \emph{regular}) matrices~\cite{Best1977} such that
\begin{equation}
    \forall i\in \{1,\ldots,n\}: \quad \sum_{j=1}^n h_{ij}={\rm const} = \mp 2\sqrt{d}.
\end{equation}
In other words, the classical value must saturate the Best bound~\cite{Best1977}:
\begin{equation}
n^2 2^{-n}\binom{n}{n/2 }\leqslant \sqrt{n}\beta_C^{h}\leqslant n\sqrt{n}.\label{best-bound}
\end{equation}
Regular matrices exist only for square dimensions, $n=l^2$. The value of $l$ can be easily inferred from the relation $n=4d=l^2$, where $d$ is an integer guaranteed by the sufficient condition on the existence of a Hadamard matrix.

\subsubsection{Classical value and excess of a Hadamard matrix}

Maximal (optimal) local value $\beta_C$ of a given Bell inequality, $
\langle\mathcal{ {B}}\rangle \leqslant \beta_C$, is given by the optimization over all local strategies, which do not allow for quantum correlations. Mathematically, up to normalization, $\beta_C$ corresponds to the value of the so-called \emph{optimized excess} of $H$.

The notion of \emph{excess} was originally defined for Hadamard matrices in the pure mathematical context without any relations to quantum nonlocality~\cite{Schmidt1973, Best1977}. The classical excess is just a sum of all matrix elements
\begin{equation}
\sigma(H) \equiv \sum_{j}\sum_{k}H_{jk},
\end{equation}
while its \emph{optimized} counterpart~\cite{Goyeneche2023}, denoted by $\Sigma$, yields a similar number achieved along the Hadamard orbit which is defined by the monomial transformation: two real Hadamard matrices $H_1$ and $H_2$ of the same size are said to be equivalent, if one can be obtained from another by a monomial transformation $H_1 = M_1 H_2 M_2$, where $M_j=P_j \odot D_j$ is a monomial matrix -- a permutation matrix multiplied (entry-wise Hadamard product) by a diagonal matrix with entries in $\{-1,1\}$.
For real-valued Hadamard matrices, monomial matrices contain only values in $\{-1,1\}$. 
Hence
\begin{equation}
    \Sigma(H)\equiv \max_{M_1,M_2}\Big\{\sum_{j}^{}\sum_{k}^{}\big[M_1HM_2\big]_{jk}\Big\}=\sqrt{n} \beta_{C}^{h}.
\end{equation}
Clearly the excess is permutation-invariant, hence only multiplication by diagonal matrices affects its value. Equivalently, such a multiplication corresponds to the negation of particular rows and columns.

Monomial equivalence relations stratify the set of Hadamard matrices of a given size $n$ onto cosets. For $n=2, 4, 8$ and $12$ there is only a single equivalence class, while for $n\geqslant 16$ the number of classes grows hyper-exponentially~\cite{Craigen1991, Haagerup1996, Orrick2008}:
\[
\begin{tabular}{|c||c|c|c|c|c|c|c|c|c|c|}
  \hline
  $n$ & 2 & 4 & 8 & 12 & 16 & 20 & 24 & 28 & 32 & ... \\
  \hline
  \texttt{\#} of classes & 1 & 1 & 1 & 1 & 5 & 3 & 60 & 487 & 13710027 & ... \\
  \hline
\end{tabular}
\]
This makes the calculation of excess extremely difficult, as it might be that, depending on the representative of the equivalence class, the optimized excess differs. Already for $n=16$ one observes that $\Sigma(H_{16A})=\Sigma(H_{16B})=\Sigma(H_{16C}) = 64$ and $\Sigma(H_{16D})=\Sigma(H_{16E})=56$, where the indices $A$--$E$ denote one of five representatives of a $16$-dimensional Hadamard matrix~\cite{Tadej2006}. Observe that for $H_{16A},H_{16B}$ and $H_{16C}$, the maximal excess equals $64$, which implies that the corresponding maximal classical value equals the maximal quantum value, $\beta_C^h=\beta_Q^h=16$.

When calculating the value of the optimized excess we can utilize several tricks and methods in order to speed-up the numerical procedure. This includes: 
``binarization''~\footnote{Entries of $\mp$-Hadamard matrix as well as the classical strategies can be treated as binary strings and all vector/matrix transformations can be reduced to bit-wise operations.}, parallel computing, and a simple reduction of checking all possible (classical) strategies described in Ref.~\cite{PhysRevA.96.012113} and reproduced in Eq.~\eqref{reduced-trick}. In particular, this last property allows us to consider only a single strategy for one party, say Bob, while the strategy for Alice is fixed to all-one settings---hence, in the following we are going to refer only to this one, non-trivial strategy.

For $n=8$, removing any row yields the optimal excess equal to $18$ and every time the optimal strategy can be chosen to $[1, 1, 1, 1, 1, 1, 1, -1]$. This is one of many strategies that work irrespective of a way the matrix is truncated by cutting a row. However, not all strategies have this property, for example $[1, -1, -1, 1, -1, 1, 1]$ is excluded. For $n=16$ the situation is different. Optimized (maximized) excess for a partial $H_{16}$ is $60$, but there is no single strategy shared between all cases. This shows that one must find such a strategy individually for every removed row (as well as for each equivalence class). Numerical studies suggest the following observation:
\begin{conjecture}
The value of $\beta_C^h$ for the Bell operator induced by $H$ does not depend on the row that is removed.
\end{conjecture}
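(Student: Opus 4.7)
The plan is to dualize $\beta_C^h$ into a form where the deleted row appears only as a single subtracted coordinate, and then to exploit the orthogonality of $H$. With $h=H^{(k)}/\sqrt{n-1}$ (where $H^{(k)}$ denotes $H$ with its $k$-th row removed) and the identity $|x|=\max_{s\in\{-1,1\}}sx$, formula~\eqref{reduced-trick} rewrites as
\begin{equation*}
    \sqrt{n-1}\,\beta_C^{h}=\max_{\vec a\in\{-1,1\}^{n-1}}\sum_{j=1}^n\Bigl|\sum_{i\neq k}H_{ij}a_i\Bigr|=\max_{\vec s\in\{-1,1\}^n}\sum_{i\neq k}|(H\vec s)_i|,
\end{equation*}
so the conjecture is equivalent to showing that the quantity
\begin{equation*}
    V_k(H)\vcentcolon=\max_{\vec s\in\{-1,1\}^n}\bigl(\|H\vec s\|_1-|(H\vec s)_k|\bigr)
\end{equation*}
is independent of $k$. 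The feasible set $\{-1,1\}^n$ is now common to all $k$, and the only $k$-dependence is the coordinate being subtracted. A further simplification is provided by column sign flips: multiplying column $j$ of $H$ by $H_{kj}$ yields a Hadamard matrix whose $k$-th row is $(1,\ldots,1)$, and the substitution $s_j\mapsto H_{kj}s_j$ shows that $V_k(H)$ is unchanged. Hence, for each $k$, I may assume without loss of generality that row $k$ of $H$ is the all-ones row.

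A partial proof becomes available when $n$ is a perfect square. The orthogonality relation $H^\mathsf{T}H=n\mathbbm{1}$ implies $\|H\vec s\|_2^2=n^2$ for every $\vec s\in\{-1,1\}^n$, so Cauchy--Schwarz gives $\|H\vec s\|_1\leqslant n\sqrt n$, with equality only when every coordinate of $H\vec s$ has absolute value exactly $\sqrt n$. When such a \emph{flat} maximizer $\vec s^*$ exists---as in $H_4$ and in the classes $H_{16A},H_{16B},H_{16C}$ reported in the excess table---every $|(H\vec s^*)_k|$ equals $\sqrt n$, so $V_k(H)=(n-1)\sqrt n$ uniformly in $k$, yielding $\beta_C^h=\sqrt{n(n-1)}$ and proving the conjecture in those cases. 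This matches the numerical values $V_k=6$ at $n=4$ and $V_k=60$ at $n=16$.

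The main obstacle is the non-square case, already present at $n=8$ (and also for the Hadamard matrices $H_{16D},H_{16E}$, whose excesses $56<64$ preclude flat maximizers in the full $n\times n$ problem). For such matrices $\sqrt n\notin\mathbb{Z}$ or the combinatorial structure simply does not support flat vectors, so the optimizer $\vec s^*_k$ of $V_k$ genuinely depends on $k$, yet numerics still indicate that the \emph{value} is constant. To close this gap one would need a finer structural input---for instance, proving that the multiset $\{|(H\vec s^*)_i|\}_{i=1}^n$ of absolute coordinates at \emph{some} optimizer of $\|H\vec s\|_1$ is (up to permutation) independent of the chosen maximizer, which would ensure that $\min_{\vec s^*}|(H\vec s^*)_k|$ is the same for every $k$. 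An alternative route is a symmetry argument via a transitive action of $\mathrm{Aut}(H)$ on rows, which works for the Paley and Sylvester families but fails for generic Hadamard matrices at large $n$. Since the number of monomial equivalence classes already exceeds $10^7$ at $n=32$, a uniform proof seems elusive, which is presumably why the statement is left as a conjecture.
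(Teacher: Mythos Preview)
The paper offers no proof of this statement: it is explicitly labeled a conjecture, supported only by numerical verification for $n\in\{4,8,12,16,20\}$ and partial checks beyond. There is therefore nothing in the paper to compare your argument against.

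Your dualization is correct and is a genuine contribution. Starting from Eq.~\eqref{reduced-trick} and swapping the order of the two $\pm1$ maximizations gives exactly
\[
\sqrt{n-1}\,\beta_C^{h}=\max_{\vec s\in\{-1,1\}^n}\sum_{i\neq k}\bigl|(H\vec s)_i\bigr|=V_k(H),
\]
so the deleted row appears only as the single subtracted coordinate. The column sign-flip normalization is also valid.

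However, your ``partial proof'' in the flat-maximizer case does not actually close. A flat $\vec s^*$ with all $|(H\vec s^*)_i|=\sqrt n$ yields only the \emph{lower} bound $V_k(H)\geqslant(n-1)\sqrt n$ for every $k$; you have not shown that no other $\vec s$ beats this value for some particular $k$. The natural upper bounds (Cauchy--Schwarz on the remaining $n-1$ coordinates, or the quantum bound $\beta_C^h\leqslant n$) both give $V_k\leqslant n\sqrt{n-1}$, which is strictly larger than $(n-1)\sqrt n$. For $n=4$ one can close the gap by hand via the parity constraint on the entries of $H\vec s$, but your argument as written does not do so, and for $n=16$ no such easy patch is available. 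Thus even the cases $H_{16A},H_{16B},H_{16C}$ are not settled by your reasoning; the matching with the numerical value $60$ is suggestive but not a proof. Your final paragraph correctly identifies the general case as open, and that assessment should be extended to the flat case as well.
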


The conjecture has been verified for all real-valued representatives of Hadamard matrices of size $n\in\{4,8,12,16,20\}$ and partially for some higher dimensions, and as such it requires separate studies.

\section{Conclusions}

In this work, we introduced a general framework for constructing Bell inequalities with an arbitrary number of binary measurements, for which the quantum bound can be determined analytically. This construction is based on the class of row-orthogonal, column-normalized (ROCN) matrices (Definition~\ref{def:ROCN_matrix}). The resulting Bell inequalities, which we refer to as ROCN Bell inequalities, admit an explicit sum-of-squares (SOS) decomposition of the corresponding Bell operators, providing a direct route to compute their quantum bound.

We further derived a necessary and sufficient condition under which an ROCN Bell inequality enables self-testing. In particular, we proved self-testing of families of Clifford observables, which play a fundamental role in quantum theory and quantum information.

Within this framework, we identified a previously unrecognized self-testing equivalence that emerges when certifying an odd number of Clifford observables. In the canonical case of three Pauli matrices, and more generally for representations on an odd number of qubits, this reduces to the well-known complex-conjugation ambiguity. For representations on an even number of qubits, the ambiguity arises from the existence of two unitarily inequivalent irreducible representations of the Clifford algebra, related by an outer automorphism. This transformation plays a role analogous to complex conjugation and can be incorporated within the self-testing formalism. For an even number of Clifford generators, self-testing under the standard definition remains possible.

We also compared ROCN Bell inequalities to other families with analytically accessible quantum bounds. In particular, Platonic Bell inequalities share a similar structural form; however, there exist ROCN inequalities that are not Platonic, and Platonic inequalities that are not ROCN. A notable subclass of ROCN Bell inequalities arises from Hadamard matrices. In this case, the classical bound of the associated Bell inequality coincides exactly with the classical excess of the Hadamard matrix, revealing a direct connection between combinatorial matrix theory and Bell nonlocality.

Several topics remain to be explored. The identification of a previously unrecognized self-testing equivalence motivates the development of a general framework that systematically incorporates such invariances. Another open question is whether a broader class of Bell inequalities can be defined that unifies the ROCN and Platonic families while maintaining analytic computability of the quantum bound. Finally, the framework introduced in this work---particularly the construction of Bell inequalities and the associated self-testing statements---appears to extend naturally to the entanglement-swapping network scenario~\cite{PhysRevA.85.032119}, potentially enabling the derivation of Bell inequalities for detecting nonbilocality and self-testing the quantum network~\cite{Autor2025}.

\section{Acknowledgments}
We are indebted to Robert Craigen for his comments on partial Hadamard matrices. This work is supported by the National Science Centre (Poland) through the SONATA BIS project No.\ 019/34/E/ST2/00369, and funding from the European Union's Horizon Europe research and innovation programme under grant agreement No.\ 101080086 NeQST.

\appendix

\section{Jordan--Wigner representation}
\label{app:ST-theorems}

In this appendix, we present results used in the proof of self-testing. We begin with the proof of the Jordan--Wigner representation theorem.

\begin{proof}[Proof of Theorem~\ref{th:1}]
    If $m=1$, the result is trivial. Let $m\geqslant 2$ and set $r=\lfloor m/2\rfloor \geqslant 1$. Consider the first pair of observables $A_1$ and $A_2$ satisfying
    \begin{equation}\label{eq:constraints}
        \{A_1, A_2\} = 0, \quad A_1^2 = A_2^2 = \mathbbm{1}.
    \end{equation}
    Using the cyclic property of the trace together with the above relations, we find:
    \begin{equation}
        \Tr[A_1] = \Tr[A_2 A_2 A_1] = \Tr[A_2 A_1 A_2] = - \Tr[A_1],
    \end{equation}
    which implies that $A_1$ is traceless. Since the eigenvalues of $A_1$ are $\pm 1$, there exists a unitary operator $\tilde{U}_1$ such that
    \begin{equation}
        \tilde{U}_1 A_1 \tilde{U}_1^\dag = Z \otimes \mathbbm{1}.
    \end{equation}
    We now decompose the transformed operator $\tilde{U}_1 A_2 \tilde{U}_1^\dag$ as
    \begin{equation}\label{eq:decomposition}
        \tilde{U}_1 A_2 \tilde{U}_1^\dag = \mathbbm{1} \otimes A'_\mathbbm{1} + X \otimes A'_X + Y \otimes A'_Y + Z \otimes A'_Z.
    \end{equation}
    From Eq.~\eqref{eq:constraints}, it follows that
    \begin{align}
        A'_\mathbbm{1} = A'_Z = 0,\quad        {A'_X}^2 + {A'_Y}^2 = \mathbbm{1},\quad
        [A_X', A_Y'] = 0.
    \end{align}
    Since $A_X'$ and $A_Y'$ commute, they can be expressed in a common eigenbasis $\{\ket{k}\}$, thus yielding
    \begin{align}
        \tilde{U}_1 A_2 \tilde{U}_1^\dag &= \sum_k \left[X \cos\psi_k - Y \sin\psi_k\right] \otimes \ketbra{k} \nonumber\\ 
        &= \sum_k \begin{bmatrix}
        0 & e^{\mathrm{i}\psi_k}\\
        e^{-\mathrm{i}\psi_k} & 0 
        \end{bmatrix} \otimes \ketbra{k}.
    \end{align}
    Define a unitary operator $\tilde{V}$ as
    \begin{equation}
        \tilde{V} = \sum_k \begin{bmatrix}
        1 & 0 \\
        0 & e^{\mathrm{i}\psi_k}
        \end{bmatrix} \otimes \ketbra{k}.
    \end{equation}
    Then, $\tilde{V} \tilde{U}_1 A_2 \tilde{U}_1^\dag \tilde{V}^\dag = X \otimes \mathbbm{1}$ and $[\tilde{V}, Z \otimes \mathbbm{1}] = 0$. Setting $U_1 = \tilde{V} \tilde{U}_1$, we obtain
    \begin{equation}
        U_1 A_1 U_1^\dag = Z \otimes \mathbbm{1}, \quad U_1 A_2 U_1^\dag = X \otimes \mathbbm{1}.
    \end{equation}
    From the canonical anticommutation relations~\eqref{eq:anticomm_th1}, it follows that the remaining observables $U_1 A_i U_1^\dag$ for $i \in \{3,\ldots,m\}$ are of the form
    \begin{equation}
        U_1 A_i U_1^\dag = Y \otimes A'_i,
    \end{equation}
    where, for all $i,k\in\{3,\dots,m\}$, the Hermitian operators $A'_i$ and $A'_k$ satisfy
    \begin{equation}
        \{A'_i, A'_k\} = 2\delta_{ik} \mathbbm{1}.
    \end{equation}
    We now repeat the same construction for $A'_3$ and $A'_4$, obtaining a unitary operator $U_2$ such that
    \begin{align}
        &(\mathbbm{1}_2 \otimes U_2) U_1 A_3 U_1^\dag (\mathbbm{1}_2 \otimes U_2)^\dag = Y \otimes Z \otimes \mathbbm{1}, \\
        &(\mathbbm{1}_2 \otimes U_2) U_1 A_4 U_1^\dag (\mathbbm{1}_2 \otimes U_2)^\dag = Y \otimes X \otimes \mathbbm{1}.
    \end{align}
    Iterating this procedure $r$ times yields the representation introduced in Eq.~\eqref{eq:U}, with the overall unitary transformation
    \begin{equation}
        U = (\mathbbm{1}_2^{\otimes(r-1)} \otimes U_r)(\mathbbm{1}_2^{\otimes(r-2)} \otimes U_{r-1}) \dots (\mathbbm{1}_2 \otimes U_2) U_1.
    \end{equation}
    For odd $m$, the unpaired element $A_m$ takes, after applying $U$, the form given in Eq.~\eqref{eq:last_operator_JW}, with $A_m^{'2}=\mathbbm{1}$.
\end{proof}

The following result is a consequence of the Jordan–Wigner representation theorem and characterizes the structure of its commutant.
\begin{theorem}\label{th:3}
    Let $\{A_i\}_{i=1}^m$ be a set of observables acting on a finite-dimensional Hilbert space $\mathcal{H}$ and satisfying the canonical anticommutation relations~\eqref{eq:anticomm_th1}. Let $m=2r+\varepsilon$, with $\varepsilon\in\{0, 1\}$ and $r=\lfloor m/2\rfloor$, and let $P$ be a positive operator satisfying $[P, A_i] = 0$ for all $i \in \{1,\ldots,m\}$. Then, with respect to the Jordan--Wigner representation given in Eqs.~\eqref{eq:U} and~\eqref{eq:last_operator_JW}, the operator $P$ can be written as
    \begin{equation}\label{eq:commuting_operator}
        P= \mathbbm{1}_2^{\otimes r} \otimes P',
    \end{equation}
    where $P'$ is a positive operator. For $\varepsilon=1$ (i.e., odd $m$), we also have
    \begin{equation}\label{eq:commutation_last_observable_JW}
        [P',A'_m]=0.
    \end{equation}
\end{theorem}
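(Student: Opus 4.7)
The plan is to exploit the explicit Jordan--Wigner decomposition granted by Theorem~\ref{th:1}. Since $P$ is required to commute with every $A_i$, it commutes with the entire associative algebra they generate, so the core task is to identify that algebra. After applying the unitary of Theorem~\ref{th:1}, the first $2r$ generators act trivially on $\mathcal{H}'$, and I would show that the algebra they generate equals the full matrix algebra $\mathcal{B}((\mathbb{C}^2)^{\otimes r}) \otimes \mathbbm{1}_{\mathcal{H}'}$. From there the structure of $P$ follows from a standard commutant argument, and the odd-$m$ case reduces to a direct substitution.

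Concretely, I would carry out the following steps. First, using Eq.~\eqref{eq:U} I would express all single-qubit Pauli operators on each of the $r$ qubits as products of the $A_i$. For instance, $A_{2k-1}A_{2k} = \mathrm{i}\,\mathbbm{1}_2^{\otimes(k-1)} \otimes Y \otimes \mathbbm{1}_2^{\otimes(r-k)} \otimes \mathbbm{1}_{\mathcal{H}'}$ places a $Y$ on the $k$-th qubit, while products such as $A_{2k-3}A_{2k-2}A_{2k-1}$ place $Z$ (up to a phase) on the $k$-th qubit with identities elsewhere, and similarly for $X$. Taking further products of these single-qubit operators yields every Pauli string on the $r$ qubits, tensored with $\mathbbm{1}_{\mathcal{H}'}$. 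Since the Pauli strings form a basis of $\mathcal{B}((\mathbb{C}^2)^{\otimes r})$, the algebra generated by $A_1,\dots,A_{2r}$ is exactly $\mathcal{B}((\mathbb{C}^2)^{\otimes r}) \otimes \mathbbm{1}_{\mathcal{H}'}$.

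Second, by the standard commutant (Schur) theorem, the commutant of $\mathcal{B}((\mathbb{C}^2)^{\otimes r}) \otimes \mathbbm{1}_{\mathcal{H}'}$ inside $\mathcal{B}((\mathbb{C}^2)^{\otimes r}\otimes \mathcal{H}')$ equals $\mathbbm{1}_2^{\otimes r} \otimes \mathcal{B}(\mathcal{H}')$. Since $P$ commutes with the generators, it belongs to this commutant, so $P = \mathbbm{1}_2^{\otimes r} \otimes P'$ for some operator $P'$ on $\mathcal{H}'$. Positivity of $P'$ follows from positivity of $P$: for any $\ket{\phi} \in \mathcal{H}'$ and any unit vector $\ket{\chi} \in (\mathbb{C}^2)^{\otimes r}$, one has $\langle\phi| P' |\phi\rangle = \langle \chi\otimes\phi| P |\chi\otimes\phi\rangle \geqslant 0$.

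Finally, for $\varepsilon = 1$, I would substitute the established form $P = \mathbbm{1}_2^{\otimes r}\otimes P'$ together with $A_m = Y^{\otimes r} \otimes A'_m$ (Eq.~\eqref{eq:last_operator_JW}) into $[P,A_m]=0$. A direct tensor computation gives $Y^{\otimes r}\otimes [P',A'_m]=0$, and since $Y^{\otimes r}$ is invertible, $[P',A'_m]=0$, establishing Eq.~\eqref{eq:commutation_last_observable_JW}. The only mildly delicate point of the whole argument is the first step, namely the clean verification that the Jordan--Wigner generators really do span the full Pauli basis on the $r$ qubits; this is a classical but slightly bookkeeping-heavy check, and everything else is either a direct invocation of the commutant theorem or an immediate computation.
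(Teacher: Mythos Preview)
Your argument is correct and matches the alternative proof the paper itself sketches immediately after its main argument: show that $A_1,\dots,A_{2r}$ generate all of $\mathcal{B}((\mathbb{C}^2)^{\otimes r})\otimes\mathbbm{1}_{\mathcal{H}'}$ and then read off the commutant. The paper's primary proof, by contrast, proceeds qubit by qubit: it expands $P$ in the Pauli basis on the first tensor factor as $P=\mathbbm{1}_2\otimes P'_{\mathbbm{1}}+X\otimes P'_X+Y\otimes P'_Y+Z\otimes P'_Z$, uses $[P,A_1]=[P,A_2]=0$ to kill the $X,Y,Z$ components, and iterates. Your route is more structural and gets the result in one stroke, at the cost of invoking the commutant theorem; the paper's iterative method is more elementary and self-contained, needing nothing beyond Pauli algebra.

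One minor bookkeeping slip: the product $A_{2k-3}A_{2k-2}A_{2k-1}$ does \emph{not} place a lone $Z$ on qubit $k$---the $Y$ string on the first $k-2$ qubits survives. The product that works is $(-\mathrm{i})^{k-1}A_1A_2\cdots A_{2k-1}$, since each pair $A_{2j-1}A_{2j}$ for $j<k$ contributes a single $Y$ on qubit $j$, and these cancel the $Y^{\otimes(k-1)}$ prefix of $A_{2k-1}$. This is precisely the ``slightly bookkeeping-heavy check'' you flagged, and correcting it does not affect the validity of your argument.
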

\begin{proof}
    We prove the claim by iteratively decomposing $P$ in the Pauli basis according to the Jordan--Wigner representation. Start with the first two observables $A_1$ and $A_2$, which act non-trivially on the first $\mathbb{C}^2$ subspace. Then $P$ can be expressed as
    \begin{equation}
        P = \mathbbm{1}_2 \otimes P'_\mathbbm{1} + X \otimes P'_X + Y \otimes P'_Y + Z \otimes P'_Z.
    \end{equation}
    The commutation relations $[P, A_1] = [P, A_2] = 0$ imply that $P'_X = P'_Y = P'_Z = 0$, so that
    \begin{equation}
        P = \mathbbm{1}_2 \otimes P'_\mathbbm{1},
    \end{equation}
    with $P'_\mathbbm{1}$ being a positive operator. Repeating this procedure iteratively, we obtain Eq.~\eqref{eq:commuting_operator}. Finally, for odd $m$, the last unpaired observable $A_m$ acts on the remaining factor, which gives Eq.~\eqref{eq:commutation_last_observable_JW}.
\end{proof}
An alternative proof is obtained by noting that the first $2r$ observables in Eq.~\eqref{eq:U} generate the full algebra $\mathcal B(\mathbb{C}_2^{\otimes r})\otimes \mathbbm{1}$. Any operator commuting with all these generators must therefore be of the form consistent with Eq.~\eqref{eq:commuting_operator}.

\section{Proof of Theorem~\ref{th:self_testing_statement}}\label{app:proof_main_theorem}

In this appendix, we present the proof of Theorem~\ref{th:self_testing_statement} concerning the self-testing statement. We begin with the following preparatory lemma, which establishes that the linear system~\eqref{eq:linear_system} admits a unique solution consisting of pairwise anticommuting observables if and only if the system has full column rank.

\begin{lemma}\label{th:2}
    Let $\{A_i\}_{i=1}^m$ be a family of observables acting on a finite-dimensional Hilbert space $\mathcal{H}$ such that $A_i^2 = \mathbbm{1}$ for all $i \in \{1,\ldots,m\}$. Additionally, let $M$ be an $n \times \frac{1}{2} m (m-1)$ real matrix, and consider the following system of linear equations
    \begin{equation}\label{eq:eqnset_2}
        \forall j \in \{1,\ldots,n\}: \quad \sum_{i < k}^m M_{j,(i,k)} \{A_i, A_k\} = 0,
    \end{equation}
    where each pair $(i,k)$ for $i < k$ is treated as a single index. Then, a set of anticommuting observables~\eqref{eq:anticomm_th1} is the unique solution of~\eqref{eq:eqnset_2} if and only if $M$ has full column rank.
\end{lemma}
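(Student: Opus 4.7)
The plan is to regard the equation~\eqref{eq:eqnset_2} as a linear system in the $\binom{m}{2}$ Hermitian operators $C_{ik}\vcentcolon=\{A_i,A_k\}$ (for $i<k$), viewed as an operator-valued vector $\vec C$. The system then reads $M\vec C=0$ and each direction of the equivalence follows from standard rank considerations applied to $M$, combined with an explicit construction in the ``only if'' part.

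For the direction ``$M$ has full column rank $\Rightarrow$ the $\{A_i\}$ necessarily anticommute,'' I would observe that a left inverse of $M$ exists (since $M^\mathsf{T}M$ is invertible) and, applying it entrywise to the operator equation $M\vec C=0$, conclude that $C_{ik}=\{A_i,A_k\}=0$ for every pair $i<k$. Together with the hypothesis $A_i^2=\mathbbm{1}$, this yields the canonical anticommutation relations~\eqref{eq:anticomm_th1}, so that anticommuting families are the \emph{only} families of observables satisfying~\eqref{eq:eqnset_2}.

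For the converse, I would proceed by contraposition: assuming $\operatorname{rank}(M)<\tfrac{1}{2}m(m-1)$, I would produce observables solving~\eqref{eq:eqnset_2} that are \emph{not} pairwise anticommuting. Pick a nonzero $\vec v=(v_{ik})_{i<k}\in\ker M$. The trick is to realize the desired pattern of anticommutators through linear combinations of a fixed Jordan--Wigner family of Clifford generators $\Gamma_1,\dots,\Gamma_m$ on a sufficiently large Hilbert space (whose existence is guaranteed by Theorem~\ref{th:1}). Setting $A_i=\sum_\alpha (\vec a_i)_\alpha \Gamma_\alpha$ for real unit vectors $\vec a_i\in\mathbb{R}^m$ immediately gives $A_i^2=\|\vec a_i\|^2\mathbbm{1}=\mathbbm{1}$ and $\{A_i,A_k\}=2(\vec a_i\cdot \vec a_k)\mathbbm{1}$, so the operator equation collapses to the scalar condition $\sum_{i<k}M_{j,(i,k)}(\vec a_i\cdot\vec a_k)=0$ for all $j$.

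It therefore suffices to exhibit unit vectors whose Gram matrix has off-diagonal entries proportional to $v_{ik}$ (and is, crucially, not the identity). To this end, I would consider the symmetric matrix $G^{(\epsilon)}$ with ones on the diagonal and $\epsilon v_{ik}$ off the diagonal; since $G^{(\epsilon)}\to \mathbbm{1}$ as $\epsilon\to 0$, it is positive definite for all sufficiently small $\epsilon>0$, hence admits a Gram realization by unit vectors $\vec a_1,\dots,\vec a_m\in\mathbb{R}^m$. By construction $\vec a_i\cdot\vec a_k=\epsilon v_{ik}$, so $\sum_{i<k}M_{j,(i,k)}\{A_i,A_k\}=2\epsilon\,(M\vec v)_j\,\mathbbm{1}=0$, while $\{A_i,A_k\}=2\epsilon v_{ik}\mathbbm{1}\neq 0$ for any pair with $v_{ik}\neq 0$. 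This family satisfies all assumptions of the lemma but fails to anticommute, completing the contrapositive. The main obstacle is precisely this converse: one must not merely find a solution in the abstract space of operator-valued vectors, but actually lift an element of $\ker M$ to a concrete family of Hermitian involutions, and the linear-combinations-of-Clifford-generators trick is what makes this lift both explicit and elementary.
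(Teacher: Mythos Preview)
Your proof is correct and follows essentially the same route as the paper's: the forward direction is the trivial linear-algebra observation (which the paper leaves implicit), and your contrapositive construction---perturbing the identity Gram matrix by a kernel vector and realizing the resulting unit vectors as coefficients of a fixed family of Clifford generators---is exactly the paper's argument, with your $\epsilon$ playing the role of their parameter $\alpha$. The only cosmetic difference is that the paper phrases the construction as a one-parameter family $G(\alpha)=\mathbbm{1}_m+\alpha S$ and remarks that distinct parameters give unitarily inequivalent tuples, whereas you (equivalently and sufficiently) fix a single small $\epsilon$.
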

\begin{proof}
    Assume, for contradiction, that $M$ does not have full column rank. Then, there exists a nonzero vector $\vec{s} \in \mathbb{R}^{\frac{1}{2} m (m-1)}$ such that $M \vec{s} = 0$. We construct a one-parameter family of operator tuples $\{A_i(\alpha)\}_{i=1}^m$ satisfying Eq.~\eqref{eq:eqnset_2} and which do not satisfy the anticommutation relations~\eqref{eq:anticomm_th1}.
    
    Let $\{\tilde A_i\}_{i=1}^m$ be the family of anticommuting observables as given in the reference strategy (Definition~\ref{def:reference_strategy}). Furthermore, let us define the symmetric $m\times m$ matrix $S$, whose elements are $S_{ii}=0$, and $S_{ki}=S_{ik}=s_{(i,k)}$ for $i < k$. For all real parameters $\alpha$, define
    \begin{equation}
    	G(\alpha) \vcentcolon= G(0) + \alpha S =\mathbbm{1}_m + \alpha S.
    \end{equation}
	For $|\alpha| < 1/\|S\|$, where $\|S\|$ denotes the spectral norm of $S$, the matrix $G(\alpha)$ is positive definite with diagonal entries $1$. In this case, it is a Gram matrix, $G_{ij}(\alpha)=\vec v_i(\alpha)\cdot \vec v_j(\alpha)$ of real unit vectors $\{\vec{v}_1(\alpha),\ldots,\vec{v}_m(\alpha)\} \subset \mathbb{R}^m$. We further define
    \begin{equation}
        A_i(\alpha) = \sum_{j = 1}^m v_{i,j}(\alpha)\tilde A_j.
    \end{equation}	
	The operators $A_i(\alpha)$ are clearly Hermitian, and furthermore
    \begin{align}
        &\{A_i(\alpha), A_k(\alpha)\} = 2 \left[ \vec{v}_i(\alpha) \cdot \vec{v}_k(\alpha) \right] \mathbbm{1}, \\
        &A_i(\alpha)^2 = \| \vec{v}_i(\alpha) \|^2 \mathbbm{1} = \mathbbm{1}.
    \end{align}
	The family of operators $\{A_i(\alpha)\}$ satisfies Eq.~\eqref{eq:eqnset_2} for every $\alpha<1/\|S\|$:
    \begin{align}
        \sum_{i < k}^m M_{j,(i,k)} G_{ik}(\alpha) &= \sum_{i < k}^m M_{j,(i,k)} (0 + \alpha s_{(i,k)}) \nonumber\\
        &= \alpha \left( M \vec{s} \right)_j = 0.
    \end{align}
	Finally, for $\alpha \neq \alpha'$ the Gram matrices $G(\alpha)$ and $G(\alpha')$ are different. Since unitary conjugations correspond to orthogonal transformations on the vectors $\vec v_i$, which preserve Gram matrices, the tuples $\{A_i(\alpha)\}$ and $\{A_i(\alpha')\}$ cannot be unitarily equivalent. This contradicts the uniqueness condition, so $M$ must have full column rank.
\end{proof}

With this preliminary result established, we can now proceed to the proof of the theorem concerning the self-testing statement.

\begin{proof}[Proof of Theorem~\ref{th:self_testing_statement}]
We prove that condition~\eqref{eq:necessary_self_testing} implies that the maximal violation of the Bell inequality self-tests the reference strategy. The converse direction follows from Lemma~\ref{th:2}. The proof is organized into three parts. In \emph{Part~I}, we show that achieving the maximal quantum violation forces Alice’s observables to form a pairwise anticommuting family. In \emph{Part~II}, we employ vectorization techniques to establish the relation between Alice’s and Bob’s observables, as well as the Schmidt coefficients of the shared quantum state. Finally, in \emph{Part~III}, we invoke the Jordan--Wigner representation theorem to demonstrate self-testing of the reference strategy. The distinction between the even and odd $m$ cases becomes explicit in this final step.\\

\emph{Part I.} Let $h$ be an $m\times n$ ROCN matrix, and let $ {\mathcal B}_h$ denote the corresponding Bell operator. Suppose that a bipartite quantum state $\ket{\psi}\in\mathcal{H}_\mathrm{A} \otimes\mathcal{H}_\mathrm{B}$, together with the local observables $\{A_i\}_{i=1}^m$, $\{B_j\}_{j=1}^n$, achieves the maximal quantum violation of the Bell inequality. We assume that all observables $A_i$ and $B_j$ square to identity, and that the shared state $\ket{\psi}$ is pure---an assumption that can always be made~\cite{Baptista2023}. As discussed in Appendix~\ref{app:weird_relation}, we can additionally assume that the reduced density matrices $\rho_{\mathrm{A}}$ and $\rho_\mathrm{B}$ are full rank.

In the proof of Proposition~\ref{prop:quantum_bound}, we introduced an SOS decomposition of $ {\mathcal B}_h$ [see Eq.~\eqref{eq:sos}]. From this, it follows that the maximal quantum value $\beta_Q^{h} = n$ is attained if and only if the shared state $\ket{\psi}$ satisfies $N_j\ket{\psi}=0$ for all $j\in\{1,\dots, n\}$, or equivalently, if the saturation condition~\eqref{eq:saturation_quantum_bound} holds. As a consequence:
\begin{equation}\label{eq:eqnset_0}
    \forall j\in\{1,\ldots,n\}: \quad \Bigr(\sum_{i = 1}^m h_{ij} A_i \otimes B_j \Bigr)^2 \ket{\psi} = \ket{\psi}.
\end{equation}
Expanding the square and using $A_i^2 = B_j^2 = \mathbbm{1}$, we obtain
\begin{equation}\label{eq:linear_system}
\forall j \in \{1,\ldots,n\}: \quad \sum_{i < k}^m h_{ij}h_{kj} \{A_i, A_k\}\ket{\psi} = 0.
\end{equation}
Since the reduced density matrix $\rho_\mathrm{A} = \Tr_\mathrm{B}[\ketbra{\psi}]$ is full rank, it further implies that the system of equations~\eqref{eq:eqnset} is satisfied. If the matrix $M$ defined in Eq.~\eqref{eq:matrix_M} has full column rank, then Eq.~\eqref{eq:eqnset} enforces that Alice’s observables $\{A_i\}$ satisfy the canonical anticommutation relations~\eqref{eq:anticomm_th1}. \\

\textit{Part II.} Let us now express the shared state $\ket{\psi}$ in its Schmidt decomposition:
\begin{equation}
    \ket{\psi} = \sum_{i=1}^D \lambda_i \ket{i} \otimes \ket{i} =  (P_\mathrm{A} \otimes \mathbbm{1}) \ket{\Phi_D},
\end{equation}
where $P_\mathrm{A} = \sqrt{D} \sum_{i=1}^D \lambda_i \ketbra{i}$, $D=\dim \mathcal{H}_{\mathrm{A}}=\dim\mathcal{H}_{\mathrm{B}}$ and
\begin{equation}
    \ket{\Phi_D} =\frac{1}{\sqrt{D}} \sum_{i=1}^D \ket{i}\otimes\ket{i}
\end{equation}
is the maximal entangled state between $\mathcal{H}_\mathrm{A}A$ and $\mathcal{H}_\mathrm{B}$. Observe that, since we assumed the reduced density matrices $\rho_{\mathrm{A}}$ and $\rho_{\mathrm{B}}$ to be full rank, the Hilbert spaces $\mathcal{H}_{\mathrm{A}}$ and $\mathcal{H}_{\mathrm{B}}$ have the same dimensions. Moreover, we choose the Schmidt basis so that it coincides with the eigenbasis of both $\rho_{\mathrm{A}}$ and $\rho_{\mathrm{B}}$.

For convenience, we introduce the operators
\begin{equation}\label{eq:definition_A_bar}
   F_j \vcentcolon = \sum_{i=1}^m h_{ij} A_i,
\end{equation}
so that the saturation condition~\eqref{eq:saturation_quantum_bound} can be rewritten as
\begin{equation}
    \forall j \in \{1,\ldots,n\}: \quad \left(F_j P_\mathrm{A} \otimes B_j \right) \ket{\Phi_D} = (P_\mathrm{A} \otimes \mathbbm{1}) \ket{\Phi_D}.
\end{equation}
Since the anticommutation relations imply that $F_j^2 = \mathbbm{1}$ for all $j \in \{1,\ldots,n\}$, we can apply the operator $F_j \otimes \mathbbm{1}$ to both sides and use the flip identity from Eq.~\eqref{eq:flip-flop} to obtain
\begin{equation}
    \left(\mathbbm{1} \otimes  B_j  P_\mathrm{A} \right) \ket{\Phi_D} = (\mathbbm{1} \otimes P_\mathrm{A} F_j^\mathsf{T})\ket{\Phi_D}.
\end{equation}
Projecting both sides onto $\ket{\ell} \otimes \ket{k}$ for all $k,\ell \in \{1, \ldots, D\}$ yields
\begin{equation}
    \mel{k}{  B_j  P_\mathrm{A} }{\ell} = \mel{k}{ P_\mathrm{A} F_j^\mathsf{T}}{\ell},
\end{equation}
and hence, on the subspace available to Bob
\begin{equation}\label{eq:operators_bob}
    B_j P_\mathrm{A} = P_\mathrm{A} F_j^\mathsf{T}.
\end{equation}
Taking the Hermitian conjugate gives
\begin{equation}
    P_\mathrm{A}  B_j  = F_j^\mathsf{T} P_\mathrm{A}.
\end{equation}
Combining these two equalities, we find
\begin{equation}
     P_\mathrm{A}  B_j B_j  P_\mathrm{A} = P_\mathrm{A}^2 = F_j^\mathsf{T} P_\mathrm{A}^2 F_j^\mathsf{T},
\end{equation}
which implies that $[P_\mathrm{A}^2,  F_j^\mathsf{T}] = 0$. Since $P_\mathrm{A}$ is positive, it follows that $[P_\mathrm{A},  F_j^\mathsf{T}]=0$. From Eq.~\eqref{eq:operators_bob}, we obtain
\begin{equation}
    B_j  = F_j^\mathsf{T} = \sum_{i=1}^m h_{ij} A_i^\mathsf{T}.
\end{equation}
This shows that the form of Bob's observables is determined up to a local isometry. Finally, substituting the definition~\eqref{eq:definition_A_bar} into the commutation relation above gives
\begin{equation}
    \sum_{i=1}^m h_{ij} [P_\mathrm{A}, {A}_i^T]=0.
\end{equation}
Multiplying this equation by $h_{kj}$ and summing over $j$ leads to
\begin{multline}
    \forall k\in\{1,\dots,m \}:\quad \sum_{j=1}^n h_{kj}^2\  [P_\mathrm{A}, {A}_k^\mathsf{T}]=0
    \\\implies [P_\mathrm{A}, {A}_k^\mathsf{T}]=0.
\end{multline}
Altogether, we have established that: Alice's observables $\{A_i\}$ satisfy the canonical anticommutation relations~\eqref{eq:anticomm_th1}, Bob's observables $\{B_j\}$ can be expressed as linear combinations of the anticommuting operators $\{A_i^\mathsf{T}\}$, and the operator $P_{\mathrm{A}}$ commutes with all $A_i$.\\

\textit{Part III.} By the Jordan–Wigner representation theorem, up to a unitary transformation, Alice's Hilbert space can be decomposed as
\begin{equation}
    \mathcal{H}_{\mathrm{A}} = \bigotimes_{k=1}^r \mathbb{C}^2 \otimes \mathcal{H}',
\end{equation}
and the observables take the canonical form
\begin{equation}
    \forall i \in \{1,\ldots,2r\}:\quad A_i = \tilde{A}_i \otimes \mathbbm{1}_{\mathcal{H}'},
\end{equation}
while for the case of odd $m = 2r + 1$,
\begin{equation}
A_m = \tilde{A}_m \otimes A'_m,
\end{equation}
where $A'_m$ is Hermitian and unitary. Furthermore, if we denote $d' = \dim \mathcal{H}'$, then the total Schmidt rank of the shared state is $D= d d'$, with $d=2^r$. Since $P_\mathrm{A}$ commutes with all $A_i$, by Theorem~\ref{th:3} it must take a tensor-product form:
\begin{equation}
    P_\mathrm{A} = \mathbbm{1}_2^{\otimes r} \otimes P'= \mathbbm{1}_2^{\otimes r} \otimes \left(\sum_{\alpha=1}^{d'}{\lambda_\alpha} \sqrt{d'}\ketbra{\alpha}\right).
\end{equation}
Because $[A_m',P']=0$, we can choose the basis $\{\ket{\alpha}\}$ to be a common eigenbasis of $A_m'$ and $P'$, with eigenvalues $\pm 1$. We can then decompose Bob’s Hilbert space as
\begin{equation}
    \mathcal{H}_{\mathrm{B}}=\bigotimes_{k=1}^r \mathbb{C}^2 \otimes \mathcal{H}',
\end{equation}
and the state $\ket{\psi}$ becomes:
\begin{align}
    \ket{\psi}&=\left(\frac{1}{\sqrt{d}}\sum_{\boldsymbol i\in\{0,1\}^r} \ket{\boldsymbol i}\otimes\ket{\boldsymbol{i}}\right)\otimes \left(\sum_{\alpha=1}^{d'}{\lambda_\alpha} \ket{\alpha}\otimes\ket{\alpha}\right) \nonumber\\
    &=\ket{\Phi_d}\otimes \ket{\xi}.
\end{align}
For even $m$, Bob's observables can be written as
\begin{align}
    B_j=\sum_{i=1}^m h_{ij} A_i^{\mathsf{T}} =\sum_{i=1}^{2r} h_{ij} \tilde A_i^{\mathsf{T}}\otimes \mathbbm{1}_{\mathcal{H}'} = \tilde B_j \otimes \mathbbm{1}_{\mathcal{H}'},
\end{align}
which yields the standard self-testing (Definition~\ref{def:self-testing}). For odd $m$, we have
\begin{equation}
    B_j=\sum_{i=1}^m h_{ij} A_i^{\mathsf{T}} =\sum_{i=1}^{2r} h_{ij} \tilde A_i^{\mathsf{T}}\otimes \mathbbm{1}_\mathcal{H'} + h_{mj} \tilde A_m^{\mathsf{T}} \otimes A_m'.
\end{equation}
Let $M_0=N_0$ and $M_1=N_1$ denote the projections onto the $+1$ and $-1$ eigenspaces of $A_m'$ on Alice's and Bob's side, respectively. It follows that
\begin{equation}
	\mel{\xi}{(M_0\otimes N_0+M_1\otimes N_1)}{\xi}=1,
\end{equation}
and, for all $i\in\{1,\dots,m\}$ and $j\in\{1,\dots, n\}$,
\begin{align}
	A_i&=\tilde A_i\otimes M_0+ \tilde A_i^{\whitestar} \otimes M_1,\\
	B_j&=\tilde B_i\otimes N_0+ \tilde B_j^{\whitestar} \otimes N_1.
\end{align}
These relations show that in the case of odd $m$, the observables are self-tested in the sense of Eq.~\eqref{eq:two_parties_self-testing}.
\end{proof}

\section{Lifting the full rank assumption}
\label{app:weird_relation}

In self-testing, one should avoid making any a priori assumptions about the \emph{test} strategy that reproduces the observed correlations. In full generality, this means allowing the shared state to be mixed, the reduced density matrices to be non-full rank, and the measurements to be general POVMs. In our framework, the latter corresponds to relaxing the condition $A_i^2 = B_j^2 = \mathbbm{1}$ to $A_i^2, B_j^2 \leqslant \mathbbm{1}$. However, by invoking Naimark’s dilation theorem together with purification, one can, without loss of generality, restrict attention to pure states and observables that square to the identity (see Ref.~\cite[Theorem 4.1]{Baptista2023}). We now show that, in addition, one may assume without loss of generality that the reduced density matrix on either subsystem is full rank.
\begin{theorem}
    To establish self-testing from the maximal violation of an ROCN Bell inequality, it is sufficient to assume that the observables ${A_i}$ and ${B_j}$ square to the identity, that the shared state $\ket{\psi}$ is pure, and that the reduced density matrices $\rho_\mathrm{A} = \Tr_\mathrm{B}[\ketbra{\psi}]$ and $\rho_\mathrm{B} = \Tr_\mathrm{A}[\ketbra{\psi}]$ are full rank.
\end{theorem}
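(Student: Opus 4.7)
The plan is to show that any strategy $(\ket{\psi},\{A_i\},\{B_j\})$ achieving the maximal violation $\beta_Q^h=n$ can be restricted to the supports of its local reductions, yielding an equivalent strategy whose reduced states are full rank. Let $\Pi_\mathrm{A}$ and $\Pi_\mathrm{B}$ denote the orthogonal projections onto $\mathcal{H}_\mathrm{A}^{s}\vcentcolon=\mathrm{supp}(\rho_\mathrm{A})$ and $\mathcal{H}_\mathrm{B}^{s}\vcentcolon=\mathrm{supp}(\rho_\mathrm{B})$. Since $\ket{\psi}$ is pure, its Schmidt decomposition lies in $\mathcal{H}_\mathrm{A}^{s}\otimes\mathcal{H}_\mathrm{B}^{s}$ with only strictly positive Schmidt coefficients. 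If one can show $[A_i,\Pi_\mathrm{A}]=0$ and $[B_j,\Pi_\mathrm{B}]=0$ for every $i$ and $j$, then the restrictions $A_i^{s}\vcentcolon=\Pi_\mathrm{A} A_i\Pi_\mathrm{A}$ and $B_j^{s}\vcentcolon=\Pi_\mathrm{B} B_j\Pi_\mathrm{B}$ are Hermitian reflections on $\mathcal{H}_\mathrm{A}^{s}$ and $\mathcal{H}_\mathrm{B}^{s}$, the correlations are preserved thanks to $\Pi_\mathrm{A}\otimes\Pi_\mathrm{B}\ket{\psi}=\ket{\psi}$, and the restricted triple still saturates the quantum bound while now satisfying the full-rank assumption. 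A self-testing isometry for the restricted strategy then extends, by acting trivially on the orthogonal complements of the supports, to a self-testing isometry for the original strategy.

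The decisive step is therefore to establish the two commutation relations. Here I would invoke the SOS saturation condition from Proposition~\ref{prop:quantum_bound}, which yields
\begin{equation}
    (F_j\otimes B_j)\ket{\psi}=\ket{\psi},\qquad F_j\vcentcolon=\sum_{i=1}^m h_{ij}A_i,
\end{equation}
for every $j$, and after multiplying by $\mathbbm{1}\otimes B_j$ using $B_j^2=\mathbbm{1}$,
\begin{equation}
    (F_j\otimes\mathbbm{1})\ket{\psi}=(\mathbbm{1}\otimes B_j)\ket{\psi}.
\end{equation}
The right-hand side lies in $\mathcal{H}_\mathrm{A}^{s}\otimes\mathcal{H}_\mathrm{B}$ and the left-hand side in $\mathcal{H}_\mathrm{A}\otimes\mathcal{H}_\mathrm{B}^{s}$, so both sides are forced into the intersection $\mathcal{H}_\mathrm{A}^{s}\otimes\mathcal{H}_\mathrm{B}^{s}$. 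Expanding the Schmidt decomposition of $\ket{\psi}$, whose Schmidt vectors form orthonormal bases of $\mathcal{H}_\mathrm{A}^{s}$ and $\mathcal{H}_\mathrm{B}^{s}$, and using the linear independence of those bases on Bob's side, yields $\Pi_\mathrm{A}^{\perp}F_j\Pi_\mathrm{A}=0$, and symmetrically $\Pi_\mathrm{B}^{\perp}B_j\Pi_\mathrm{B}=0$; Hermiticity upgrades these to $[F_j,\Pi_\mathrm{A}]=0$ and $[B_j,\Pi_\mathrm{B}]=0$.

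Propagating support preservation from $\{F_j\}$ to the individual $\{A_i\}$ is where the ROCN structure enters explicitly. The row-orthogonality condition~\eqref{eq:hmatrix_1} together with the assumption that no row of $h$ vanishes makes $hh^{\mathsf{T}}=\mathrm{diag}(H_1^2,\dots,H_m^2)$ invertible, so $h$ has full row rank and each $A_i$ can be written as a linear combination of the $F_j$'s; consequently $[A_i,\Pi_\mathrm{A}]=0$ for every $i$. I expect the main obstacle to be precisely the intersection-of-subspaces step: turning the \emph{vector} identity $(F_j\otimes\mathbbm{1})\ket{\psi}=(\mathbbm{1}\otimes B_j)\ket{\psi}$ into the \emph{operator} statement that $F_j$ and $B_j$ preserve the respective supports, which is where the positivity of all Schmidt coefficients of $\ket{\psi}$ on $\mathcal{H}_\mathrm{A}^{s}\otimes\mathcal{H}_\mathrm{B}^{s}$ is genuinely used. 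Everything afterwards---checking that $(A_i^{s})^2=\mathbbm{1}_{\mathcal{H}_\mathrm{A}^{s}}$ from $[A_i,\Pi_\mathrm{A}]=0$, that the reduced correlations coincide, and that the self-testing isometries can be extended block-diagonally across the supports and their complements---amounts to routine bookkeeping.
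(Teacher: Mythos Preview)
Your proposal is correct and follows essentially the same route as the paper: derive the vector identity $(F_j\otimes\mathbbm{1})\ket{\psi}=(\mathbbm{1}\otimes B_j)\ket{\psi}$ from SOS saturation, conclude that $F_j$ (and by Hermiticity and the full row rank of $h$, each $A_i$) is block-diagonal with respect to $\mathrm{supp}(\rho_\mathrm{A})\oplus\mathrm{supp}(\rho_\mathrm{A})^\perp$, and then restrict. The only cosmetic difference is that the paper obtains $\Pi_\mathrm{A}^\perp F_j\Pi_\mathrm{A}=0$ via the partial-trace computation $R_1 F_j R_0\,\rho_\mathrm{A}=\Tr_\mathrm{B}[R_1\otimes B_j\ketbra{\psi}]=0$, whereas you reach the same conclusion by expanding the Schmidt decomposition and invoking linear independence of the Schmidt vectors on Bob's side; both arguments are equivalent and use positivity of the Schmidt coefficients in the same way.
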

\begin{proof}
It suffices to show that one can restrict the Hilbert spaces $\mathcal{H}_{\mathrm{A}}$ and $\mathcal{H}_\mathrm{B} $ to the supports of $\rho_\mathrm{A}$ and $\rho_\mathrm{B}$, respectively, without affecting the properties of the observables, in particular ensuring that $A_i^2 = \mathbbm{1}$ still holds on the reduced supports. Consider the operators
\begin{equation}\label{eq:def_tilde A_j}
    F_j \vcentcolon= \sum_{i=1}^m h_{ij}A_i.
\end{equation}
Suppose that the state $\ket{\psi}$ achieves the maximal quantum violation. Then it satisfies the saturation condition~\eqref{eq:saturation_quantum_bound}, which can be rewritten as
\begin{multline}\label{eq:condition_FB}
    \forall j \in\{1,\dots,n\}:\quad F_j \otimes B_j\ket{\psi}=\ket{\psi} \\
    \implies F_j \otimes \mathbbm{1}\ket{\psi}=\mathbbm{1} \otimes B_j\ket{\psi}.
\end{multline}
We begin with Alice’s subsystem. Let $\mathcal{H}_0$ denote the support of $\rho_\mathrm{A}$, and define its orthogonal complement $\mathcal{H}_1 = \mathcal{H}_0^\perp$, so that the full Hilbert space decomposes as
\begin{equation}\label{eq:decomposition_self-testing}
    \mathcal{H}_\mathrm{A} =\mathcal{H}_0\oplus \mathcal{H}_1.
\end{equation}
Let $R_0$ and $R_1$ be the orthogonal projections onto $\mathcal{H}_0$ and $\mathcal{H}_1$, respectively. Clearly, the state $\ket{\psi}$ is supported entirely on $\mathcal{H}_0$: 
\begin{align}
    R_0\otimes \mathbbm{1} \ket{\psi}=\ket{\psi},\quad 
    R_1\otimes \mathbbm{1} \ket{\psi}=0.
\end{align}
Next, we define the block components of $F_j$ with respect to the decomposition~\eqref{eq:decomposition_self-testing}.
For $\alpha,\beta \in \{0,1\}$, set $F_j^{(\alpha\beta)}=R_{\alpha} F_j R_{\beta}$, so that
\begin{equation}
    F_j=\begin{bmatrix}
        F_j^{(00)} & F_j^{(01)}\\
        F_j^{(10)} & F_j^{(11)}
    \end{bmatrix}.
\end{equation}
We have
\begin{align}\label{eq:null_operators}
    R_1 F_j R_0\, \rho_\mathrm{A} &= \Tr_\mathrm{B}[R_1 F_j R_0\otimes \mathbbm{1}\ketbra{\psi}] \nonumber\\
    &=\Tr_\mathrm{B}[R_1 F_j \otimes \mathbbm{1}\ketbra{\psi}] \nonumber\\
    &=\Tr_\mathrm{B}[R_1\otimes B_j\ketbra{\psi}]=0.
\end{align}
It follows that $R_1 F_j R_0=0$, and by Hermiticity of $F_j$, also $R_1 F_j R_0=0$. Therefore, each operator $F_j$ is block-diagonal with respect to the decomposition $\mathcal{H}_\mathrm{A}=\mathcal{H}_0\oplus \mathcal{H}_1$:
\begin{equation}
    F_j=\begin{bmatrix}
        F_j^{(00)} & 0\\
        0 & F_j^{(11)}
    \end{bmatrix}.
\end{equation}
From the definition of $F_j$ in Eq.~\eqref{eq:def_tilde A_j}, and since $h$ has orthogonal (and hence linearly independent) rows, it follows that
\begin{equation}
\forall i \in\{1,\dots,m\}:\quad R_1 A_i R_0 = 0.
\end{equation}
By Hermiticity of $A_i$, we also have $R_0 A_i R_1 = 0$. Therefore each observable $A_i$ is also block-diagonal with respect to the decomposition $\mathcal{H}_\mathrm{A}=\mathcal{H}_0\oplus \mathcal{H}_1$:
\begin{equation}
A_i = \begin{bmatrix}
A_i^{(00)} & 0 \\
0 & A_i^{(11)}
\end{bmatrix}.
\end{equation}
It is then clear that the restricted operators $A_i^{(00)}$ square to the identity on $\mathcal{H}_0$. Hence, we may restrict the observables $A_i$ to the subspace $\mathcal{H}_0$ without loss of generality. The same argument applies to Bob’s observables. Specifically, using Eq.~\eqref{eq:condition_FB} and repeating the reasoning in~\eqref{eq:null_operators}, one concludes that the observables $\{B_j\}$ can be restricted to the support of $\rho_\mathrm{B}$ while preserving $B_j^2 = \mathbbm{1}$.
\end{proof}

\section{Partial transposition vs. local unitary and complex conjugation equivalences}
\label{app:partial_transposition_vs_cc_u}
In this appendix, we show explicitly that the two strategies $\{\tilde A_i\},\  \{\tilde B_j\},\  \ket{\Phi_d}$ and $\{{\tilde A}^{\whitestar}_i\},\  \{{\tilde B}^{\whitestar}_j\},\  \ket{\Phi_d}$, introduced in Definition~\ref{def:reference_strategy} and Eqs.~\eqref{eq:A_whitestar}--\eqref{eq:B_whitestar}, cannot, in general, be obtained from one another by combining complex conjugation with local unitary transformations. We focus on the case where $m=2r +1$ is odd. 

Assume, toward a contradiction, that there exist unitary operators $U_\mathrm{A}$ on $\mathcal{H}_\mathrm{A}$ and $U_{\mathrm{B}}$ on $\mathcal{H}_\mathrm{B}$ such that, for all $i,j$,
\begin{align}\label{eq:strategy_canonical_app}
    {\tilde A}^{\whitestar}_i&=U_\mathrm{A} \tilde A_i U_{\mathrm{A}}^\dagger,\\
    {\tilde B}^{\whitestar}_j&=U_\mathrm{B} \tilde B_j U_{\mathrm{B}}^\dagger,\\
    \ket{\Phi_d}&=U_{\mathrm{A}}\otimes U_{\mathrm{B}}\ket{\Phi_d}. \label{eq:strategy_partial_transposed_app}
\end{align}
By construction,
\begin{equation}
    \forall i\in\{1,\dots,2r\}:\quad \tilde A^{\whitestar}_i=\tilde A_i,
\end{equation}
and consequently, the operator $U_\mathrm{A}$ must commute with every $\tilde A_1,\dots,\tilde A_{2r}$. Since the $C^*$-algebra generated by $\{A_i\}_{i=1}^{2r}$ is the full operator algebra $\mathcal B(\mathbb{C}^{2\otimes r})$, it follows that $U_\mathrm{A}$ must be proportional to the identity. Consequently, the final condition
\begin{equation}
    \tilde A_{2r+1}^{\whitestar}=-\tilde A_{2r+1}=U_{\mathrm{A}}\tilde A_{2r+1} U_{\mathrm{A}}^\dagger
\end{equation}
cannot be satisfied. 

Next, consider the complex conjugates $\tilde A_{i}^*$ and $\tilde B_j^*$. Suppose we seek local unitaries such that, for all $i,j$,
\begin{align}
    {\tilde A}^{\whitestar}_i&=U_\mathrm{A} \tilde A_i^* U_{\mathrm{A}}^\dagger,\\
    {\tilde B}^{\whitestar}_j&=U_\mathrm{B} \tilde B_j^* U_{\mathrm{B}}^\dagger,\\
    \ket{\Phi_d}&=U_{\mathrm{A}}\otimes U_{\mathrm{B}}\ket{\Phi_d}. 
\end{align}
Using the explicit representation of $\tilde A_i$ given in Theorem~\ref{th:1}, we have
\begin{equation}
    \forall i\in\{1,\dots, m\}:\quad \tilde A_i^{*}=(-1)^{\lfloor \frac{i-1}{2}\rfloor}\tilde A_i.
\end{equation}
In particular, $\tilde A_m^*=(-1)^r \tilde A_m^*$.
Consider the operator
\begin{equation}
    U_\mathrm{A}= \mathbbm{1}\otimes Y \otimes \mathbbm{1}\otimes Y\otimes \dots,
\end{equation}
and notice that
\begin{equation}
    \forall i\in\{1,\dots,2r\}:\quad U_\mathrm{A}\tilde A_i U_\mathrm{A}^\dagger =(-1)^{\lfloor \frac{i-1}{2}\rfloor} \tilde A_i,
\end{equation}
while $U_\mathrm{A} \tilde A_m U_\mathrm{A}^\dagger=\tilde A_m$. 

As a consequence, when $r$ is even (corresponding to an even number of qubits), we obtain
\begin{equation}
    \forall i\in\{1,\dots,m\}: \quad \tilde A_i=U_{\mathrm{A}}\tilde A_i^*U_{\mathrm{A}}^\dagger.
\end{equation}
In this case, the operators $\tilde A_i^*$ are unitarily equivalent to $A_i$, and from the first part of the proof it follows that the two strategies~\eqref{eq:strategy_canonical_app} and~\eqref{eq:strategy_partial_transposed_app} cannot be transformed into each other via complex conjugation and unitaries. More general transformation should be considered~\cite{konderak_asymptotics}.

For $r$ odd (corresponding to an odd number of qubits), we find instead
\begin{equation}
    \forall i\in\{1,\dots,m\}: \quad \tilde A_i^{\whitestar}=U_{\mathrm{A}}\tilde A_i^*U_{\mathrm{A}}^\dagger.
\end{equation}
By choosing $U_{\mathrm{B}}=U^*_{\mathrm{A}}=U^{\mathsf{T}}_{\mathrm{A}}$, it follows that
\begin{equation}
    \forall j\in\{1,\dots,n\}:\quad \tilde B_j^{\whitestar}=U_{\mathrm{B}}\tilde B_j^*U_{\mathrm{B}}^\dagger,
\end{equation}
and $\ket{\Phi_d}=U_{\mathrm{A}}\otimes U_\mathrm{B}\ket{\Phi_d}$. Hence, in this case, the two strategies~\eqref{eq:strategy_canonical_app} and~\eqref{eq:strategy_partial_transposed_app} can indeed be related via complex conjugation and local unitary transformations. 

\bibliography{bibliography}

@PREAMBLE{
 "\providecommand{\noopsort}[1]{}" 
 # "\providecommand{\singleletter}[1]{#1}%" 
}

@article{Bell1964,
  title = {On the {E}instein {P}odolsky {R}osen paradox},
  author = {Bell, J. S.},
  journal = {Physics Physique Fizika},
  volume = {1},
  issue = {3},
  pages = {195--200},
  numpages = {6},
  year = {1964},
  month = {11},
  publisher = {American Physical Society},
  doi = {10.1103/PhysicsPhysiqueFizika.1.195},
  url = {https://link.aps.org/doi/10.1103/PhysicsPhysiqueFizika.1.195}
}

@article{Supic2020,
  doi = {10.22331/q-2020-09-30-337},
  url = {https://doi.org/10.22331/q-2020-09-30-337},
  title = {Self-testing of quantum systems: a review},
  author = {{\v{S}}upi{\'{c}}, Ivan and Bowles, Joseph},
  journal = {{Quantum}},
  issn = {2521-327X},
  publisher = {{Verein zur F{\"{o}}rderung des Open Access Publizierens in den Quantenwissenschaften}},
  volume = {4},
  pages = {337},
  month = sep,
  year = {2020}
}

@article{Clauser1969,
  title = {Proposed Experiment to Test Local Hidden-Variable Theories},
  author = {Clauser, John F. and Horne, Michael A. and Shimony, Abner and Holt, Richard A.},
  journal = {Phys. Rev. Lett.},
  volume = {23},
  issue = {15},
  pages = {880--884},
  numpages = {0},
  year = {1969},
  month = {10},
  publisher = {American Physical Society},
  doi = {10.1103/PhysRevLett.23.880},
  url = {https://link.aps.org/doi/10.1103/PhysRevLett.23.880}
}

@article{Goyeneche2023,
doi = {10.22331/q-2023-02-02-911},
url = {https://doi.org/10.22331/q-2023-02-02-911},
title = {Local hidden variable values without optimization procedures},
author = {Goyeneche, Dardo and Bruzda, Wojciech and Turek, Ond{\v{r}}ej and Alsina, Daniel and {\.{Z}}yczkowski, Karol},
journal = {{Quantum}},
issn = {2521-327X},
publisher = {{Verein zur F{\"{o}}rderung des Open Access Publizierens in den Quantenwissenschaften}},
volume = {7},
pages = {911},
month = feb,
year = {2023}
}

@article{Pal2022,
author = {P{\'{a}}l, K{\'{a}}roly F. and V{\'{e}}rtesi, Tam{\'{a}}s},
doi = {10.22331/q-2022-07-07-756},
journal = {Quantum},
pages = {756},
title = {{Platonic {B}ell inequalities for all dimensions}},
volume = {6},
year = {2022}
}

@article{Epping2013,
author = {Epping, Michael and Kampermann, Hermann and Bru{\ss}, Dagmar},
doi = {10.1103/PhysRevLett.111.240404},
issn = {00319007},
journal = {Phys. Rev. Lett.},
number = {24},
pages = {1--5},
title = {{Designing Bell inequalities from a {T}sirelson bound}},
volume = {111},
year = {2013}
}

@article{Augusiak2014,
doi = {10.1088/1751-8113/47/42/424002},
url = {https://doi.org/10.1088/1751-8113/47/42/424002},
year = {2014},
month = {oct},
publisher = {IOP Publishing},
volume = {47},
number = {42},
pages = {424002},
author = {Augusiak, R and Demianowicz, M and Acín, A},
title = {Local hidden–variable models for entangled quantum states},
journal = {Journal of Physics A: Mathematical and Theoretical}
}

@article{Horn1954,
author = {Horn, Alfred},
doi = {10.2307/2372705},
issn = {00029327},
journal = {American Journal of Mathematics},
month = {7},
number = {3},
pages = {620},
title = {{Doubly Stochastic Matrices and the Diagonal of a Rotation Matrix}},
url = {https://www.jstor.org/stable/2372705?origin=crossref},
volume = {76},
year = {1954}
}

@article{PhysRevA.96.012113,
  title = {Qutrit witness from the {G}rothendieck constant of order four},
  author = {Divi\'anszky, P\'eter and Bene, Erika and V\'ertesi, Tam\'as},
  journal = {Phys. Rev. A},
  volume = {96},
  issue = {1},
  pages = {012113},
  numpages = {11},
  year = {2017},
  publisher = {American Physical Society},
  doi = {10.1103/PhysRevA.96.012113},
  url = {https://link.aps.org/doi/10.1103/PhysRevA.96.012113}
}

@Article{Coladangelo2017,
author={Coladangelo, Andrea
and Goh, Koon Tong
and Scarani, Valerio},
title={All pure bipartite entangled states can be self-tested},
journal={Nature Communications},
year={2017},
month={May},
day={26},
volume={8},
number={1},
pages={15485},
issn={2041-1723},
doi={10.1038/ncomms15485},
url={https://doi.org/10.1038/ncomms15485}
}

@article{Mancinska2024,
author = {Man{\v{c}}inska, Laura and Prakash, Jitendra and Schafhauser, Christopher},
doi = {10.1007/s00220-024-05122-3},
journal = {Communications in Mathematical Physics},
month = {9},
number = {9},
pages = {221},
title = {{Constant-Sized Robust Self-Tests for States and Measurements of Unbounded Dimension}},
url = {https://link.springer.com/10.1007/s00220-024-05122-3},
volume = {405},
year = {2024}
}

@Article{Chen2024,
author={Chen, Ranyiliu
and Man{\v{c}}inska, Laura
and Vol{\v{c}}i{\v{c}}, Jurij},
title={All real projective measurements can be self-tested},
journal={Nature Physics},
year={2024},
month={Oct},
day={01},
volume={20},
number={10},
pages={1642-1647},
issn={1745-2481},
doi={10.1038/s41567-024-02584-z},
url={https://doi.org/10.1038/s41567-024-02584-z}
}

@article{Acin2007,
title = {Device-Independent Security of Quantum Cryptography against Collective Attacks},
author = {Ac\'{\i}n, Antonio and Brunner, Nicolas and Gisin, Nicolas and Massar, Serge and Pironio, Stefano and Scarani, Valerio},
journal = {Phys. Rev. Lett.},
volume = {98},
issue = {23},
pages = {230501},
numpages = {4},
year = {2007},
month = {Jun},
publisher = {American Physical Society},
doi = {10.1103/PhysRevLett.98.230501},
url = {https://link.aps.org/doi/10.1103/PhysRevLett.98.230501}
}

@article{Buhrman2016,
author = {Buhrman, Harry and Czekaj, {\L}ukasz and Grudka, Andrzej and Horodecki, Michal and Horodecki, Pawel and Markiewicz, Marcin and Speelman, Florian and Strelchuk, Sergii},
doi = {10.1073/pnas.1507647113},
issn = {10916490},
journal = {Proceedings of the National Academy of Sciences of the United States of America},
keywords = {Bell inequality,Communication complexity,Nonlocality,Port-based teleportation,Quantum},
number = {12},
pages = {3191--3196},
title = {{Quantum communication complexity advantage implies violation of a Bell inequality}},
volume = {113},
year = {2016}
}

@article{Buhrman2010,
author = {Buhrman, Harry and Cleve, Richard and Massar, Serge and {De Wolf}, Ronald},
doi = {10.1103/RevModPhys.82.665},
eprint = {0907.3584},
issn = {00346861},
journal = {Reviews of Modern Physics},
number = {1},
pages = {665--698},
title = {{Nonlocality and communication complexity}},
volume = {82},
year = {2010}
}

@article{Goh2018,
title = {Geometry of the set of quantum correlations},
author = {Goh, Koon Tong and Kaniewski, J\k{e}drzej and Wolfe, Elie and V\'ertesi, Tam\'as and Wu, Xingyao and Cai, Yu and Liang, Yeong-Cherng and Scarani, Valerio},
journal = {Phys. Rev. A},
volume = {97},
issue = {2},
pages = {022104},
numpages = {20},
year = {2018},
month = {Feb},
publisher = {American Physical Society},
doi = {10.1103/PhysRevA.97.022104},
url = {https://link.aps.org/doi/10.1103/PhysRevA.97.022104}
}

@article{Mayers2004,
author = {Mayers, Dominic and Yao, Andrew},
doi = {10.26421/qic4.4-3},
issn = {15337146},
journal = {Quantum Information and Computation},
keywords = {Quantum cryptography,Quantum information,Self testing,Violation of locality},
number = {4},
pages = {273--286},
title = {{Self testing quantum apparatus}},
volume = {4},
year = {2004}
}

@article{Sarkar2021,
author={Sarkar, Shubhayan
and Saha, Debashis
and Kaniewski, J{\k{e}}drzej
and Augusiak, Remigiusz},
title={Self-testing quantum systems of arbitrary local dimension with minimal number of measurements},
journal={npj Quantum Information},
year={2021},
month={Oct},
day={14},
volume={7},
number={1},
pages={151},
issn={2056-6387},
doi={10.1038/s41534-021-00490-3},
url={https://doi.org/10.1038/s41534-021-00490-3}
}

@article{PhysRevA.96.032119,
  title = {Self-testing properties of {G}isin's elegant {B}ell inequality},
  author = {Andersson, Ole and Badzi\k{a}g, Piotr and Bengtsson, Ingemar and Dumitru, Irina and Cabello, Ad\'an},
  journal = {Phys. Rev. A},
  volume = {96},
  issue = {3},
  pages = {032119},
  numpages = {6},
  year = {2017},
  month = {9},
  publisher = {American Physical Society},
  doi = {10.1103/PhysRevA.96.032119},
  url = {https://link.aps.org/doi/10.1103/PhysRevA.96.032119}
}

@article{Baptista2023,
journal={arXiv:2310.12662},
title={A mathematical foundation for self-testing: Lifting common assumptions}, 
author={Pedro Baptista and Ranyiliu Chen and Jędrzej Kaniewski and David Rasmussen Lolck and Laura Mančinska and Thor Gabelgaard Nielsen and Simon Schmidt},
url={https://arxiv.org/abs/2310.12662}, 
year={2023}
}

@book{Holevo2012,
url = {https://doi.org/10.1515/9783110273403},
title = {Quantum Systems, Channels, Information. A Mathematical Introduction},
author = {Alexander S. Holevo},
publisher = {De Gruyter},
address = {Berlin, Boston},
doi = {doi:10.1515/9783110273403},
isbn = {9783110273403},
year = {2013},
lastchecked = {2025-08-22}
}

@book{Lounesto2001,
author = {Lounesto, Pertti},
doi = {10.1017/CBO9780511526022},
isbn = {9780521005517},
month = {5},
publisher = {Cambridge University Press},
title = {{Clifford Algebras and Spinors}},
url = {https://www.cambridge.org/core/product/identifier/9780511526022/type/book},
year = {2001}
}

@article{Lundholm2009,
journal={arXiv:0907.5356},
title={Clifford algebra, geometric algebra, and applications}, 
author={Douglas Lundholm and Lars Svensson},
year={2009},
url={https://arxiv.org/abs/0907.5356}, 
}

@article{Clifford1878,
author = {Clifford, W.},
doi = {10.2307/2369379},
issn = {00029327},
journal = {American Journal of Mathematics},
number = {4},
pages = {350},
title = {{Applications of Grassmann's Extensive Algebra}},
url = {https://www.jstor.org/stable/2369379?origin=crossref},
volume = {1},
year = {1878}
}

@article{Jordan1928,
year = {1928},
volume = {47},
pages = {631--651},
author = {P. Jordan and E. Wigner},
journal = {Z. Phys.},
title = {{{\"{U}}ber das Paulische {\"{A}}quivalenzverbot}},
doi = {10.1007/BF01331938}
}

@book{Derezinski2023,
author = {Derezi{\'{n}}ski, Jan},
title = {{Clifford algebras and fermions}},
year = {2023},
url = {https://www.fuw.edu.pl/~derezins/clifford.pdf}
}

@book{Bratteli1981,
author = {Bratteli, Ola and Robinson, Derek W.},
booktitle = {Operator Algebras and Quantum Statistical Mechanics},
doi = {10.1007/978-3-662-09089-3},
isbn = {978-3-662-09091-6},
publisher = {Springer Berlin Heidelberg},
title = {{Operator Algebras and Quantum Statistical Mechanics}},
url = {http://link.springer.com/10.1007/978-3-662-09089-3},
year = {1981}
}

@book{Quantization,
author = {Greiner, Walter and Reinhardt, Joachim},
booktitle = {Field Quantization},
doi = {10.1007/978-3-642-61485-9},
publisher = {Springer Berlin Heidelberg},
isbn = {9783540780489},
title = {{Field Quantization}},
year = {1996}
}

@article{Szalay2021,
author = {Szalay, Szil{\'{a}}rd and Zimbor{\'{a}}s, Zolt{\'{a}}n and M{\'{a}}t{\'{e}}, Mih{\'{a}}ly and Barcza, Gergely and Schilling, Christian and Legeza, {\"{O}}rs},
doi = {10.1088/1751-8121/ac0646},
issn = {1751-8113},
journal = {Journal of Physics A: Mathematical and Theoretical},
keywords = {Fermionic correlation and entanglement,Fermionic parity superselection,Fermionic second quantization,Jordan-Wigner representation of CAR algebra,Quantum information theory},
month = {11},
number = {39},
pages = {393001},
title = {{Fermionic systems for quantum information people}},
url = {https://iopscience.iop.org/article/10.1088/1751-8121/ac0646},
volume = {54},
year = {2021}
}

@book{Nielsen2012,
author = {Nielsen, Michael A. and Chuang, Isaac L.},
doi = {10.1017/CBO9780511976667},
isbn = {9781107002173},
issn = {0010-7514},
month = {6},
pages = {604--605},
publisher = {Cambridge University Press},
title = {{Quantum Computation and Quantum Information}},
url = {https://www.cambridge.org/core/product/identifier/9780511976667/type/book},
year = {2012}
}

@article{Araki2003,
author = {Araki, Huzihiro and Moriya, Hajime},
doi = {10.1007/s00220-003-0832-6},
issn = {00103616},
journal = {Communications in Mathematical Physics},
number = {1-2},
pages = {105--122},
title = {{Joint extension of states of subsystems for a CAR system}},
volume = {237},
year = {2003}
}

@article{Wehner2010,
author = {Wehner, Stephanie and Winter, Andreas},
title = {Higher entropic uncertainty relations for anti-commuting observables},
journal = {Journal of Mathematical Physics},
volume = {49},
number = {6},
pages = {062105},
year = {2008},
month = {06},
issn = {0022-2488},
doi = {10.1063/1.2943685},
url = {https://doi.org/10.1063/1.2943685}
}

@article{Coles2017,
  title = {Entropic uncertainty relations and their applications},
  author = {Coles, Patrick J. and Berta, Mario and Tomamichel, Marco and Wehner, Stephanie},
  journal = {Rev. Mod. Phys.},
  volume = {89},
  issue = {1},
  pages = {015002},
  numpages = {58},
  year = {2017},
  month = {Feb},
  publisher = {American Physical Society},
  doi = {10.1103/RevModPhys.89.015002},
  url = {https://link.aps.org/doi/10.1103/RevModPhys.89.015002}
}

@article{Tavakoli2020,
doi = {10.22331/q-2020-07-09-293},
url = {https://doi.org/10.22331/q-2020-07-09-293},
title = {The {P}latonic solids and fundamental tests of quantum mechanics},
author = {Tavakoli, Armin and Gisin, Nicolas},
journal = {{Quantum}},
issn = {2521-327X},
publisher = {{Verein zur F{\"{o}}rderung des Open Access Publizierens in den Quantenwissenschaften}},
volume = {4},
pages = {293},
month = jul,
year = {2020}
}

@book{Abadir2005,
place={Cambridge}, 
series={Econometric Exercises}, 
title={Matrix Algebra}, 
publisher={Cambridge University Press}, 
author={Abadir, Karim M. and Magnus, Jan R.},
year={2005}, 
collection={Econometric Exercises},
url={https://doi.org/10.1017/CBO9780511810800}
}

@article{Bowles2018,
  title = {Self-testing of {P}auli observables for device-independent entanglement certification},
  author = {Bowles, Joseph and \v{S}upi\'{c}, Ivan and Cavalcanti, Daniel and Ac\'{\i}n, Antonio},
  journal = {Phys. Rev. A},
  volume = {98},
  issue = {4},
  pages = {042336},
  numpages = {24},
  year = {2018},
  month = {10},
  publisher = {American Physical Society},
  doi = {10.1103/PhysRevA.98.042336},
  url = {https://link.aps.org/doi/10.1103/PhysRevA.98.042336}
}

@article{Kaniewski17,
  title = {Self-testing of binary observables based on commutation},
  author = {Kaniewski, J\c{e}drzej},
  journal = {Phys. Rev. A},
  volume = {95},
  issue = {6},
  pages = {062323},
  numpages = {10},
  year = {2017},
  month = {6},
  publisher = {American Physical Society},
  doi = {10.1103/PhysRevA.95.062323},
  url = {https://link.aps.org/doi/10.1103/PhysRevA.95.062323}
}

@article{DGK14,
  author       = {{Djokovi\'{c}}, Dragomir {\v{Z}}. and Golubitsky, Oleg and Kotsireas, Ilias S.},
  title        = {Some New Orders of {H}adamard and Skew-{H}adamard Matrices},
  journal      = {Journal of Combinatorial Designs},
  volume       = {22},
  number       = {6},
  pages        = {270--277},
  year         = {2014},
  doi          = {10.1002/jcd.21358},
}

@book{Horadam2007,
  author    = {Horadam, K. J.},
  title     = {Hadamard Matrices and Their Applications},
  publisher = {Princeton University Press},
  year      = {2007},
  address   = {Princeton, NJ}
}

@article{Schmidt1973,
  author    = {Schmidt, K. W.},
  title     = {Problem 863},
  journal   = {Mathematics Magazine},
  volume    = {46},
  pages     = {103},
  year      = {1973},
}

@article{Best1977,
  author    = {Best, M.},
  title     = {The excess of a {H}adamard matrix},
  journal   = {Indagationes Mathematicae},
  volume    = {39},
  pages     = {357--361},
  year      = {1977},
}

@article{Craigen1991,
  author    = {Craigen, R.},
  title     = {Equivalence Classes of Inverse Orthogonal and Unit {H}adamard Matrices},
  journal   = {Bulletin of the Australian Mathematical Society},
  volume    = {44},
  pages     = {109--115},
  year      = {1991},
  doi       = {10.1017/S0004972700028713},
}

@incollection{Haagerup1996,
  author    = {Haagerup, Uffe},
  title     = {Orthogonal maximal abelian *-subalgebras of the $n \times n$ matrices and cyclic $n$-roots},
  booktitle = {Operator Algebras and Quantum Field Theory},
  editor    = {Doplicher, Sergio and Longo, Roberto},
  publisher = {International Press},
  address   = {Cambridge, MA},
  pages     = {296--322},
  year      = {1996},
}

@article{Orrick2008,
  author    = {Orrick, William P.},
  title     = {Switching operations for {H}adamard matrices},
  journal   = {SIAM Journal on Discrete Mathematics},
  volume    = {22},
  number    = {1},
  pages     = {31--50},
  year      = {2008},
  doi       = {10.1137/050641727},
}

@article{Tadej2006,
  author    = {Wojciech Tadej and Karol Życzkowski},
  title     = {A concise guide to complex {H}adamard matrices},
  journal   = {Open Systems and Information Dynamics},
  volume    = {13},
  number    = {2},
  pages     = {133--177},
  year      = {2006},
  doi       = {10.1007/s11080-006-8220-2},
  url       = {https://doi.org/10.1007/s11080-006-8220-2},
}

@article{Yang2013,
  author    = {Yang, Tzyh Haur and Navascués, Miguel},
  title     = {Robust self-testing of unknown quantum systems into any entangled two-qubit states},
  journal   = {Phys. Rev. A},
  volume    = {87},
  number    = {5},
  pages     = {050102(R)},
  year      = {2013},
  doi       = {10.1103/PhysRevA.87.050102},
}

@article{Marciniak2015,
  title = {Unbounded Violation of Quantum Steering Inequalities},
  author = {Marciniak, M. and Rutkowski, A. and Yin, Z. and Horodecki, M. and Horodecki, R.},
  journal = {Phys. Rev. Lett.},
  volume = {115},
  issue = {17},
  pages = {170401},
  numpages = {5},
  year = {2015},
  month = {10},
  publisher = {American Physical Society},
  doi = {10.1103/PhysRevLett.115.170401},
  url = {https://link.aps.org/doi/10.1103/PhysRevLett.115.170401}
}

@article{Irfan2020,
  title = {Certified quantum measurement of {M}ajorana fermions},
  author = {Irfan, Abu Ashik Md. and Mayer, Karl and Ortiz, Gerardo and Knill, Emanuel},
  journal = {Phys. Rev. A},
  volume = {101},
  issue = {3},
  pages = {032106},
  numpages = {12},
  year = {2020},
  month = {Mar},
  publisher = {American Physical Society},
  doi = {10.1103/PhysRevA.101.032106},
  url = {https://link.aps.org/doi/10.1103/PhysRevA.101.032106}
}

@article{Tavakoli2021,
author = {Armin Tavakoli  and Máté Farkas  and Denis Rosset  and Jean-Daniel Bancal  and Jedrzej Kaniewski },
title = {Mutually unbiased bases and symmetric informationally complete measurements in {B}ell experiments},
journal = {Science Advances},
volume = {7},
number = {7},
pages = {eabc3847},
year = {2021},
doi = {10.1126/sciadv.abc3847},
}

@article{BechmannPasquinucci2003,
author = {Bechmann-Pasquinucci, H. and Gisin, N.},
doi = {10.1103/PhysRevA.67.062310},
file = {:home/akonderak/Dropbox/PhD/Resource theory/PhysRevA.67.062310.pdf:pdf},
issn = {10941622},
journal = {Phys. Rev. A},
number = {6},
pages = {10},
title = {{Intermediate states in quantum cryptography and Bell inequalities}},
volume = {67},
year = {2003}
}

@book{Samoilenko1991,
address = {Dordrecht},
author = {Samoilenko, Y. S.},
doi = {10.1007/978-94-011-3806-2},
file = {:home/akonderak/Dropbox/PhD/Resource theory/978-94-011-3806-2.pdf:pdf},
isbn = {978-94-010-5693-9},
pages = {167--186},
pmid = {38252772},
publisher = {Springer Netherlands},
series = {Mathematics and Its Applications},
title = {{Spectral Theory of Families of Self-Adjoint Operators}},
url = {http://link.springer.com/10.1007/978-94-011-3806-2},
volume = {57},
year = {1991}
}

@InProceedings{Mosca2011,
author="McKague, Matthew
and Mosca, Michele",
editor="van Dam, Wim
and Kendon, Vivien M.
and Severini, Simone",
title="Generalized Self-testing and the Security of the 6-State Protocol",
booktitle="Theory of Quantum Computation, Communication, and Cryptography",
year="2011",
publisher="Springer Berlin Heidelberg",
pages="113--130",
abstract="Self-tested quantum information processing provides a means for doing useful information processing with untrusted quantum apparatus. Previous work was limited to performing computations and protocols in real Hilbert spaces, which is not a serious obstacle if one is only interested in final measurement statistics being correct (for example, getting the correct factors of a large number after running Shor's factoring algorithm). This limitation was shown by McKague et al. to be fundamental, since there is no way to experimentally distinguish any quantum experiment from a special simulation using states and operators with only real coefficients.",
isbn="978-3-642-18073-6",
url="https://doi.org/10.1007/978-3-642-18073-6_10"
}

@article{Supic2023,
author = {{\v{S}}upi{\'{c}}, Ivan and Bowles, Joseph and Renou, Marc-Olivier and Ac{\'{i}}n, Antonio and Hoban, Matty J.},
doi = {10.1038/s41567-023-01945-4},
issn = {1745-2473},
journal = {Nature Physics},
number = {5},
pages = {670--675},
title = {{Quantum networks self-test all entangled states}},
url = {https://www.nature.com/articles/s41567-023-01945-4},
volume = {19},
year = {2023}
}

@article{BalanzoJuando2024,
journal = {arXiv:2412.13266},
author = {Balanz{\'{o}}-Juand{\'{o}}, Maria and Coladangelo, Andrea and Augusiak, Remigiusz and Ac{\'{i}}n, Antonio and {\v{S}}upi{\'{c}}, Ivan},
title = {{All pure multipartite entangled states of qubits can be self-tested up to complex conjugation}},
year = {2024},
url = {https://arxiv.org/abs/2412.13266}
}

@article{PhysRevLett.113.210403,
  title = {Strong Monogamies of No-Signaling Violations for Bipartite Correlation {B}ell Inequalities},
  author = {Ramanathan, Ravishankar and Horodecki, Pawe\l{}},
  journal = {Phys. Rev. Lett.},
  volume = {113},
  issue = {21},
  pages = {210403},
  numpages = {5},
  year = {2014},
  month = {Nov},
  publisher = {American Physical Society},
  doi = {10.1103/PhysRevLett.113.210403},
  url = {https://link.aps.org/doi/10.1103/PhysRevLett.113.210403}
}

@article{Barizien2024custombell,
  doi = {10.22331/q-2024-05-02-1333},
  url = {https://doi.org/10.22331/q-2024-05-02-1333},
  title = {Custom {B}ell inequalities from formal sums of squares},
  author = {Barizien, Victor and Sekatski, Pavel and Bancal, Jean-Daniel},
  journal = {{Quantum}},
  issn = {2521-327X},
  publisher = {{Verein zur F{\"{o}}rderung des Open Access Publizierens in den Quantenwissenschaften}},
  volume = {8},
  pages = {1333},
  month = may,
  year = {2024}
}

@article{sarkar2024universalschemeselftestquantum,
      title={A universal scheme to self-test any quantum state and extremal measurement}, 
      author={Shubhayan Sarkar and Alexandre C. Orthey, Jr. and Remigiusz Augusiak},
      year={2024},
      journal={arXiv:2312.04405},
      url={https://arxiv.org/abs/2312.04405}, 
}

@article{gisin2007bellinequalitiesquestionsanswers,
      title={Bell inequalities: many questions, a few answers}, 
      author={Nicolas Gisin},
      year={2007},
      journal={arXiv:quant-ph/0702021},
      url={https://arxiv.org/abs/quant-ph/0702021}, 
}

@article{PhysRevA.85.032119,
  title = {Bilocal versus nonbilocal correlations in entanglement-swapping experiments},
  author = {Branciard, Cyril and Rosset, Denis and Gisin, Nicolas and Pironio, Stefano},
  journal = {Phys. Rev. A},
  volume = {85},
  issue = {3},
  pages = {032119},
  numpages = {21},
  year = {2012},
  month = {Mar},
  publisher = {American Physical Society},
  doi = {10.1103/PhysRevA.85.032119},
  url = {https://link.aps.org/doi/10.1103/PhysRevA.85.032119}
}

@unpublished{Autor2025,
  author = {{P. Michalski et al.}},
  note   = {In preparation},
  year   = {2025}
}

@article{konderak_asymptotics,
author = {Amato, Daniele and Facchi, Paolo and Konderak, Arturo},
title = {{Asymptotic Dynamics in the Heisenberg Picture: Attractor Subspace and Choi-Effros Product}},
journal = {Open Systems \& Information Dynamics},
volume = {32},
number = {03},
pages = {2550014},
year = {2025},
doi = {10.1142/S1230161225500143},
URL = {https://doi.org/10.1142/S1230161225500143}
}
\end{document}